\documentclass[a4paper,UKenglish,cleveref,autoref]{lipics-v2021}
\hideLIPIcs

\usepackage{mathtools}
\usepackage{booktabs}
\usepackage{xspace}
\usepackage{tikz}
\usepackage{cite}
\usetikzlibrary{arrows.meta,calc,positioning,decorations.pathreplacing}

\newcommand{\AP}{\mathsf{AP}}
\newcommand{\Ag}{\mathsf{Ag}}
\newcommand{\Nat}{\mathbb{N}}

\newcommand{\KMTL}{\mathrm{KMTL}}
\newcommand{\KMTLone}{\mathrm{KMTL}_1}
\newcommand{\MTL}{\mathrm{MTL}_1}
\newcommand{\obs}{\mathsf{obs}}
\newcommand{\diag}{\mathsf{diag}}
\newcommand{\pref}[2]{#1_{\le #2}}

\newcommand{\M}{\mathcal{M}}
\newcommand{\Cell}{\mathsf{Cell}}
\newcommand{\End}{\mathsf{End}}
\newcommand{\Main}{\mathsf{Main}}
\newcommand{\Lo}{\mathsf{L}}
\newcommand{\Up}{\mathsf{U}}
\newcommand{\SameL}{\mathsf{SameL}}
\newcommand{\Pair}{\mathsf{Pair}}
\newcommand{\Good}{\mathsf{Good}}
\newcommand{\Tile}{\mathsf{Tile}}
\newcommand{\Bit}{\mathsf{Bit}}
\newcommand{\Zero}{\mathsf{Zero}}
\newcommand{\Max}{\mathsf{Max}}
\newcommand{\Carry}{\mathsf{Carry}}
\newcommand{\Inc}{\mathsf{Inc}}
\newcommand{\Base}{\Phi_{\mathsf{base}}}
\newcommand{\Col}{\Phi_{\mathsf{col}}}
\newcommand{\PhiI}{\Phi_{\mathcal{T}}}
\newcommand{\NextCell}{\mathsf{NextCell}}
\newcommand{\Pin}{P^{\mathsf{in}}}
\newcommand{\EXPSPACE}{\ensuremath{\mathsf{EXPSPACE}}\xspace}
\newcommand{\PSPACE}{\ensuremath{\mathsf{PSPACE}}\xspace}

\title{Agent-Alternation-Free Epistemic Metric Temporal Logic with Past: Model Checking and Complexity}
\titlerunning{Epistemic Metric Temporal Logic with Past}

\author{Benedikt Bollig}%
  {Université Paris-Saclay, CNRS, ENS Paris-Saclay, LMF, Gif-sur-Yvette, France}%
  {bollig@lmf.cnrs.fr}%
  {https://orcid.org/0000-0003-0985-6115}%
  {}

\author{Matthias F{\"u}gger}%
  {Université Paris-Saclay, CNRS, ENS Paris-Saclay, LMF, Gif-sur-Yvette, France}%
  {mfuegger@lmf.cnrs.fr}%
  {https://orcid.org/0000-0001-5765-0301}%
  {}

\author{Thomas Nowak}%
  {Université Paris-Saclay, CNRS, ENS Paris-Saclay, LMF, Gif-sur-Yvette, France \and
   Institut Universitaire de France, Paris, France}%
  {thomas@thomasnowak.net}%
  {https://orcid.org/0000-0003-1690-9342}%
  {}

\author{Paul Zeinaty}%
  {Université Paris-Saclay, CNRS, ENS Paris-Saclay, LMF, Gif-sur-Yvette, France \and
   Direction Générale de l'Armement, Paris, France}%
  {pzeinaty@lmf.cnrs.fr}%
  {https://orcid.org/0009-0003-2885-0410}%
  {}

\authorrunning{B.~Bollig, M.~F{\"u}gger, T.~Nowak, and P.~Zeinaty}
\Copyright{Benedikt Bollig, Matthias F{\"u}gger, Thomas Nowak, and Paul Zeinaty}

\ccsdesc[500]{Theory of computation~Modal and temporal logics}
\ccsdesc[300]{Theory of computation~Logic and verification}

\keywords{temporal logic, epistemic logic, metric temporal logic, perfect recall,
  model checking, EXPSPACE, domino tiling}

\nolinenumbers

\begin{document}

\maketitle

\begin{abstract}
We study model checking for an epistemic metric temporal logic with
past, interpreted over finite B{\"u}chi automata under synchronous perfect
recall. The logic is motivated by observation-based verification problems such
as diagnosis and opacity, where an observer sees only a projection of an
execution and reasons about events that may have occurred earlier. These
requirements use no alternation between different agents' knowledge. We
therefore consider the agent-alternation-free fragment, in which nested
knowledge operators must refer to the same agent.
We show that model checking for this
fragment is \EXPSPACE-complete. The lower bound already holds with one agent,
one occurrence of the knowledge operator, and no non-trivial metric bounds. For
the upper bound, we combine temporal test automata with perfect-recall
observers. Because past formulas may have different truth values on
indistinguishable histories ending in the same system state, the observer must
track temporal automaton states in addition to system states.
\end{abstract}

\section{Introduction}
\label{sec:introduction}

Many verification questions concern not only the behavior of a system, but
also what can be inferred from a partial observation of its execution. This
perspective is standard in discrete-event systems. Diagnosability asks whether
an observer can detect an unobservable fault after a bounded delay~\cite{SampathSLST95}.
Decentralized diagnosis considers several local observers with different projections of the same
execution~\cite{DeboukLT00}. Opacity imposes the dual confidentiality
requirement: the observations should not reveal that a secret event has
occurred. Representative works include
\cite{BryansKMR08,CassezDM09}. See also the survey~\cite{LafortuneLH18}.

A related pattern appears in runtime verification~\cite{LeuckerS09,BauerLS11}.
A monitor reads a finite observation prefix and produces a verdict only when
it is justified by every execution compatible with the observations. This
quantification over indistinguishable histories is captured by
knowledge under perfect recall~\cite{FHMV1995}. Temporal operators describe the
execution, while a knowledge operator quantifies over executions with the same
observation history.

This gives a common logical formulation of diagnosis and opacity. A diagnosis
requirement states that the observer eventually knows that a fault occurred.
An opacity requirement states that it never knows that a secret occurred. For
a given system, both become model-checking questions for epistemic temporal
formulas. Monitorability is related but not identical: it quantifies over
possible continuations of every observation prefix and is therefore a property
of the observation tree rather than the truth of one formula at the initial
position~\cite{PnueliZ06}. We use runtime verification as additional
motivation, but the formal applications developed below are diagnosis and
opacity.

Past operators are essential for these applications. A diagnoser need not know
that a fault holds at the current position. It must know that a fault occurred
at some earlier position. Likewise, opacity commonly protects the occurrence
of a past secret. Metric constraints (encoded in binary)
express timing requirements, such as
diagnosis within \(D \in \Nat\) steps or secrecy for a prescribed window. We work in
discrete time, with distances counted in steps. This restriction is important:
in dense time, opacity is already undecidable~\cite{Cassez09},
by reduction from universality of timed automata~\cite{AlurD94}.
Our logic therefore combines future and past temporal operators with strict
first-time semantics,
integer metric bounds, and synchronous perfect-recall knowledge. Under strict
first-time semantics, a metric operator constrains the distance to the first
position satisfying its argument, rather than to an arbitrary such position. We
adopt it because the knowledge-free metric fragment is then only
\PSPACE~\cite{Alsmann025}, whereas standard metric operators already make it
\EXPSPACE-complete~\cite{LaroussinieMS06}, so that the complexity we prove arises
only from past and knowledge rather than from the metric.

Existing results do not determine the complexity of this combination. For
epistemic LTL with future operators and no metric constraints, perfect-recall
model checking is nonelementary in general and \PSPACE-complete at
knowledge-alternation depth one~\cite{BozzelliMM24}. The pure past fragment
under synchronous perfect recall, without future or metric operators, has been
studied in~\cite{CohenL10}. For the future-only metric logic with strict
first-time semantics, satisfiability is \PSPACE-complete~\cite{Alsmann025}.
Metric temporal logic with knowledge over timed interpreted systems has been
studied via bounded model checking, with a different, state-based epistemic
semantics and no past operators~\cite{LomuscioPW07,Wozna-Szczesniak16}. The logic KLTL, with
future and past operators under synchronous perfect recall, forms the basis of a
richer counterfactual logic~\cite{FinkbeinerS25} that is decidable over
finite-state systems but whose exact complexity is left open. Temporal epistemic
logic has been applied to
fault diagnosis: the FDI framework of~\cite{BozzanoCGT15} specifies
bounded-delay alarm requirements in temporal epistemic logic with past, using
finite unfoldings rather than primitive binary metric operators, and without a
complexity analysis for the logic considered here.

We prove that model checking is \EXPSPACE-complete at alternation depth
one. The lower bound uses only the unbounded interval $\Nat$. Thus
\EXPSPACE-hardness already appears when past is added to the \PSPACE-complete
metric-free future epistemic fragment, without positive metric constants.
A related succinctness phenomenon is known without knowledge:
the work \cite{LMS2002} shows that adding a forgettable-past
operator to LTL with past yields \EXPSPACE-complete satisfiability and model
checking. Our lower bound adapts its tiling reduction.

The paper makes four contributions.
\begin{enumerate}[(1)]
\item We define the logic $\KMTL$, with future and past operators, binary-encoded
      metric intervals, strict first-time semantics, and synchronous
      perfect-recall knowledge, together with its agent-alternation-free
      fragment $\KMTLone$.
\item We express centralized diagnosis, decentralized diagnosis, and opacity in $\KMTLone$.
\item We give an \EXPSPACE lower bound by reduction from exponential-width
      corridor tiling. The formula has one occurrence of a knowledge
      operator and only the interval $\Nat$.
\item We give a matching upper bound. The observer records sets of pairs of a
      system state and a temporal automaton state, distinguishing histories that
      end in the same system state but disagree on past formulas.
\end{enumerate}

\paragraph*{Outline}
\Cref{sec:syntax-semantics} introduces the logic $\KMTL$ and its fragment $\KMTLone$.
\Cref{sec:examples} relates it to diagnosability and opacity.
\Cref{sec:lower-bound} proves the \EXPSPACE lower bound and
\Cref{sec:upper-bound} the matching upper bound.
We conclude in \Cref{sec:conclusion}.

\paragraph*{Acknowledgments}

This work was partly supported by the ``France 2030'' government investment plan
managed by ANR, under the reference ANR-23-PEIA-0006.

\section{The Logic \texorpdfstring{$\KMTL$}{KMTL} and its Fragment \texorpdfstring{$\KMTLone$}{KMTL\_1}}
\label{sec:syntax-semantics}

The syntax and semantics are parameterized by finite sets $\AP$ of atomic
propositions and $\Ag$ of agents. Both are part of the model-checking instance
(Definition~\ref{def:mc}). We write $\Sigma=2^{\AP}$, viewing a letter
$\sigma\in\Sigma$ as the set of propositions true at one position. Positions
are numbered from~$1$. Thus an infinite word is written
$w=\sigma_1\sigma_2\ldots\in\Sigma^\omega$. For $i\in\Nat$, let
$\pref{w}{i}=\sigma_1\ldots\sigma_i\in\Sigma^*$, with $\pref{w}{0}$ the empty
word. A \emph{pointed word} is a pair $(w,i)$ with $i\ge1$. Temporal operators
move the distinguished position along $w$, while knowledge compares $(w,i)$
with pointed words having the same observation history.

Formulas are interpreted over the language of a B\"uchi automaton:

\begin{definition}[B\"uchi automata]
\label{def:buchi-automata}
A (nondeterministic finite) B\"uchi automaton over
\(\Sigma\) is a tuple \[A=(S,S_0,\delta,F)\]
where \(S\) is a finite set of states, \(S_0\subseteq S\) is the set of initial
states, \(\delta\subseteq S\times\Sigma\times S\) is the transition relation,
and \(F\subseteq S\) is the set of accepting states.  A run over
\(w=\sigma_1\sigma_2\ldots\in\Sigma^\omega\) is an infinite sequence
\(s_0s_1s_2\ldots\) with \(s_0\in S_0\) and
\((s_i,\sigma_{i+1},s_{i+1})\in\delta\) for all \(i\ge 0\).  It is accepting if
\(s_i\in F\) for infinitely many \(i\).  The language \(L(A)\) is the set of
words admitting an accepting run.  A \emph{run prefix} over a finite word
\(\sigma_1\ldots\sigma_m\in\Sigma^*\) is a finite sequence \(s_0\ldots s_m\) with
\(s_0\in S_0\) and \((s_i,\sigma_{i+1},s_{i+1})\in\delta\) for \(0\le i<m\), also
written \[s_0\xrightarrow{\sigma_1}\cdots\xrightarrow{\sigma_m}s_m.\]

A generalized B\"uchi automaton has a finite family
\(\mathcal F=\{F_1,\ldots,F_m\}\) of accepting sets instead of one set \(F\).
A run is accepting if, for every \(j\), it visits \(F_j\) infinitely often.
\end{definition}

A standard round-robin construction turns a generalized B\"uchi
automaton into an ordinary one with at most \(m\cdot |S|\) states.

A model is a B\"uchi automaton over $\Sigma=2^{\AP}$. We write it as
$\M=(Q,Q_0,\Delta,F_\M)$ and reserve $S,S_0,\delta,F$ for automata constructed
in the proofs, which use the same definition over alphabets of the form
$2^{\AP'}$. Formulas are interpreted over $L(\M)\subseteq\Sigma^\omega$, and
the finite transition structure of $\M$ is the input representation used by
the algorithms.

\begin{definition}[Observations and synchronous perfect recall]
Each agent $a\in\Ag$ comes with a set $\AP_a\subseteq\AP$ of propositions visible
to $a$.  The observation of a letter $\sigma\in2^{\AP}$ is
\[
  \obs_a(\sigma)=\sigma\cap \AP_a.
\]
For a finite word $u=\sigma_1\ldots\sigma_i$, define
\(
  \obs_a(u)=\obs_a(\sigma_1)\ldots\obs_a(\sigma_i).
\)
Two pointed words $(w,i)$ and $(w',i')$ are synchronously indistinguishable for
$a$, written $(w,i)\sim_a(w',i')$, if
\[
  i=i' \quad\text{and}\quad
  \obs_a(\pref{w}{i})=\obs_a(\pref{w'}{i}).
\]
\end{definition}

Metric constraints use nonempty intervals $I\subseteq\Nat$ of the form
$[\ell,h]=\{n\in\Nat\mid \ell\le n\le h\}$ or
$[\ell,\infty)=\{n\in\Nat\mid \ell\le n\}$. Finite bounds are encoded in
binary. Note that our lower-bound proof uses only the unbounded interval
$\Nat=[0,\infty)$.

\begin{definition}[$\KMTL$ formulas]
The formulas of $\KMTL$ are generated by
\[
\varphi ::= \top \mid p \mid \neg\varphi \mid \varphi\wedge\varphi
\mid X\varphi \mid Y\varphi
\mid \varphi\,U^1_I\,\varphi
\mid \varphi\,S^1_I\,\varphi
\mid K_a\varphi,
\]
where $p\in\AP$, $a\in\Ag$, and $I$ is an interval.
The classical Boolean connectives $\vee$, $\rightarrow$, and $\leftrightarrow$
are defined as abbreviations.
\end{definition}

Following~\cite{Alsmann025}, the superscript $1$ denotes strict first-time
semantics. The formula $\varphi\,U^1_I\,\psi$ requires the first
$\psi$-position at or after the current position to occur at a distance in $I$,
with $\varphi$ holding before it. Dually,
$\varphi\,S^1_I\,\psi$ refers to the most recent $\psi$-position. In terms of
classical metric until, $\varphi\,U^1_I\,\psi$ is equivalent to
$(\varphi\wedge\neg\psi)\,U_I\,\psi$. The formal semantics is as follows.

\begin{definition}[Satisfaction]
Fix a B\"uchi automaton \(\M\).  Satisfaction
\(w,i\models_\M\varphi\) at a pointed word \((w,i)\), for \(w=\sigma_1\sigma_2 \ldots\in L(\M)\) and
\(i\ge 1\), is defined inductively as follows:
\[
\begin{array}{lcl}
w,i\models_\M \top && \text{always},\\[0.5ex]
w,i\models_\M p &\text{if}& p\in\sigma_i,\\[0.5ex]
w,i\models_\M \neg\varphi &\text{if}& w,i\not\models_\M\varphi,\\[0.5ex]
w,i\models_\M \varphi\wedge\psi &\text{if}& w,i\models_\M\varphi\text{ and }w,i\models_\M\psi,\\[0.5ex]
w,i\models_\M X\varphi &\text{if}& w,i+1\models_\M\varphi,\\[0.5ex]
w,i\models_\M Y\varphi &\text{if}& i>1\text{ and }w,i-1\models_\M\varphi,\\[0.5ex]
w,i\models_\M \varphi\,U^1_I\,\psi &\text{if}&
  \text{there is }j\ge i\text{ with }j-i\in I,\ w,j\models_\M\psi,\text{ and}\\
 && w,k\models_\M \varphi\wedge\neg\psi\text{ for all }i\le k<j,\\[0.5ex]
w,i\models_\M \varphi\,S^1_I\,\psi &\text{if}&
  \text{there is }j\le i\text{ with }i-j\in I,\ w,j\models_\M\psi,\text{ and}\\
 && w,k\models_\M \varphi\wedge\neg\psi\text{ for all }j<k\le i,\\[0.5ex]
w,i\models_\M K_a\varphi &\text{if}&
  w',i\models_\M\varphi\text{ for all }w'\in L(\M) \text{ with }(w,i)\sim_a(w',i).
\end{array}
\]
\end{definition}

We also use the following abbreviations, for any interval $I$:
\[
\begin{array}{rclcrclcrcl}
\Diamond_I\varphi &:=& \top\,U^1_I\,\varphi, &&
\Diamond\varphi &:=& \Diamond_{\Nat}\varphi, &&
\Box\varphi &:=& \neg\Diamond\neg\varphi,\\[2mm]
\Diamond^{-}_I\varphi &:=& \top\,S^1_I\,\varphi, &&
\Diamond^{-}\varphi &:=& \Diamond^{-}_{\Nat}\varphi, &&
\Box^{-}\varphi &:=& \neg\Diamond^{-}\neg\varphi.
\end{array}
\]
Note that $\Diamond_I\varphi$ says that the \emph{first}
position satisfying $\varphi$ lies in the window $I$. For $I=\Nat$ this is the
usual eventuality, and $\Box$ and $\Box^{-}$ are the usual universal modalities.
For $k\in\Nat$, $X^k\varphi$ denotes $k$ iterated applications of $X$, with
$X^0\varphi=\varphi$.

We restrict alternation between knowledge operators of different agents.

\begin{definition}[Agent-alternation-free fragment $\KMTLone$]
\label{def:fragment}
A $\KMTL$ formula is \emph{agent-alternation-free} if along every chain of nested
knowledge operators the agent does not change.  We
write $\KMTLone$ for the set of agent-alternation-free $\KMTL$ formulas.
\end{definition}

Equivalently, no $K_b$ with $b\ne a$ may occur in the scope of $K_a$.
Same-agent nesting is allowed, and non-nested knowledge subformulas may refer to
different agents. This is knowledge-alternation depth at most one in the sense
of~\cite{BozzelliMM24}. The subscript in $\KMTLone$ records this depth. By
contrast, the subscript in the metric logic $\MTL$ of~\cite{Alsmann025} refers
to first-time semantics, which we mark directly on the temporal operators.

\begin{definition}[Model checking]
\label{def:mc}
An instance of the (universal) model-checking problem consists of finite sets $\AP$
and $\Ag$, observation sets $(\AP_a)_{a\in\Ag}$, a B\"uchi automaton \(\M\)
over the alphabet $\Sigma=2^{\AP}$, and a formula \(\varphi\) over $\AP$ and
$\Ag$.  It asks whether
\[
  \M\models\varphi
  \quad\text{meaning}\quad
  \forall w\in L(\M):~ w,1\models_\M\varphi.
\]
\end{definition}

\begin{remark}[State-based observations and B\"uchi acceptance]
\label{rem:state-based}
The paper~\cite{BozzelliMM24} uses state-based observations
on Kripke structures and quantifies over all paths. Our observations are
letter-based, and knowledge ranges over the accepted language $L(\M)$. These
choices do not affect the cited \PSPACE comparison.

State-based and letter-based observations are polynomially interreducible. To
encode a state equivalence $\sim_a$ by letters, use at most
$\lceil\log_2|Q|\rceil$ propositions visible to $a$ that record the current
class. Conversely, store the
last emitted letter in the state and let $\sim_a$ compare its
$\AP_a$-projection.

B\"uchi acceptance can be relativized in the formula. Let
$\mathsf{Fair}:=\Box\Diamond\mathsf{acc}$, where the fresh proposition
$\mathsf{acc}$ marks accepting states and is visible to no agent. Translate
knowledge recursively by
$K_a\psi\mapsto K_a(\mathsf{Fair}\rightarrow\psi')$ and use
$\mathsf{Fair}\rightarrow\varphi'$ at the top level. The resulting formula
considers only accepting paths, both globally and inside knowledge operators,
and preserves agent alternation depth.
\end{remark}

\section{Motivating Examples from Diagnosis and Opacity}
\label{sec:examples}

Diagnosis and opacity are defined through partial observations. At a given
position, an observer must consider every execution compatible with the
observation prefix. Knowledge captures this quantification, while past
operators describe the event to be inferred, such as an earlier fault or
secret.

Classical definitions are usually phrased in terms of diagnosers and
indistinguishable runs. Diagnosis has also been formalized in temporal epistemic
logic, notably in the FDI framework of~\cite{BozzanoCGT15}, and opacity has been
expressed epistemically as $\Box\neg K\,\mathit{sec}$~\cite{MaubertPB2011}. We
show below that centralized diagnosis, decentralized diagnosis, and past-event
opacity are all expressed by shallow epistemic temporal formulas with past.
The examples account for past, metric bounds, and multiple agents while staying
inside $\KMTLone$.

\paragraph*{Bounded diagnosis}
Let $a \in \Ag$ be an observing agent and let \(f\in\AP\setminus\AP_a\) be an unobservable proposition marking a fault occurrence.
For a trace \(w\) and a position \(i\), the formula \(\Diamond^- f\) says that a
fault occurred in the prefix \(w_{\le i}\).
Let $D \in \Nat$ be a delay. In the classical sense of~\cite{SampathSLST95},
a \(D\)-delay diagnoser for a given model \(\M\) is a function
\[
  \diag:(2^{\AP_a})^+\to\{0,1\}
\]
from observation prefixes to alarms such that:
\begin{enumerate}[(i)]
\item if a finite prefix \(u\) of some \(w\in L(\M)\) is fault-free, then
      \(\diag(\obs_a(u))=0\).
\item if \(w\in L(\M)\) has a fault at position \(i\), then
      \(\diag(\obs_a(w_{\le j}))=1\) for all \(j\ge i+D\).
\end{enumerate}
The first clause forbids false alarms. The second requires the alarm to stay on
from \(D\) steps after every fault onward, matching the uniform-delay condition
of~\cite{SampathSLST95}.

\begin{lemma}[Bounded diagnosis as knowledge]
\label{lem:diagnosis-equivalence}
For every \(D\in\Nat\), the system \(\M\) has a \(D\)-delay diagnoser for
observer \(a\) and fault \(f\) iff \(\M\models\varphi_{\mathsf{diag}}^{\le D}\), where
\[
  \varphi_{\mathsf{diag}}^{\le D} := \Box\bigl(f\rightarrow \Diamond_{[0,D]} K_a\Diamond^- f\bigr).
\]
\end{lemma}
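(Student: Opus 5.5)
The plan is to fix the canonical (maximally informed) diagnoser and to show that the formula holds exactly when this diagnoser satisfies clauses (i) and (ii). First I will prove a \emph{monotonicity} property of the knowledge subformula $K_a\Diamond^- f$. Let $w\in L(\M)$ and suppose $w,j\models_\M K_a\Diamond^- f$. Then $w,j'\models_\M K_a\Diamond^- f$ for every $j'\ge j$. To see this, take any $w'\in L(\M)$ with $(w,j')\sim_a(w',j')$. The observation prefixes of length $j'$ agree, so those of length $j$ agree as well, i.e., $(w,j)\sim_a(w',j)$. Hence $w',j\models_\M\Diamond^- f$, so some $k\le j\le j'$ has $f\in\sigma'_k$, which gives $w',j'\models_\M\Diamond^- f$. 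This property removes the difference between strict first-time semantics and ordinary eventuality for this particular argument. The reason is that $\Diamond_{[0,D]}\psi$ at position $i$ only requires that the first $\psi$-position at or after $i$ lies in $[i,i+D]$.

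($\Rightarrow$) Let $\diag$ be a $D$-delay diagnoser, let $w\in L(\M)$, and let $f$ hold at position $i$. By clause (ii), $\diag(\obs_a(\pref{w}{i+D}))=1$. Now take any $w'\in L(\M)$ with $\obs_a(\pref{w'}{i+D})=\obs_a(\pref{w}{i+D})$. If $\pref{w'}{i+D}$ were fault-free, clause (i) would force the value $0$ on the same observation prefix, a contradiction. Hence $w',i+D\models_\M\Diamond^- f$, and so $w,i+D\models_\M K_a\Diamond^- f$. The set of positions $\ge i$ satisfying $K_a\Diamond^- f$ is therefore nonempty, and its least element lies in $[i,i+D]$. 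This gives $w,i\models_\M\Diamond_{[0,D]}K_a\Diamond^- f$. Since $i$ was an arbitrary $f$-position, $\M\models\varphi_{\mathsf{diag}}^{\le D}$.

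($\Leftarrow$) Define $\diag(o)=1$ iff two conditions hold: first, $o=\obs_a(\pref{w}{|o|})$ for some $w\in L(\M)$; second, every $w'\in L(\M)$ with $\obs_a(\pref{w'}{|o|})=o$ has a fault in $\pref{w'}{|o|}$. Equivalently, $\diag(\obs_a(\pref{w}{j}))=1$ iff $w,j\models_\M K_a\Diamond^- f$, and unrealizable observation sequences get value $0$. For clause (i), if $u=\pref{w}{j}$ is fault-free, then $w$ itself is a witness that violates the universal condition, so the value is $0$. For clause (ii), let $f$ hold at position $i$ of $w$. The formula yields a position $j_0\in[i,i+D]$ with $w,j_0\models_\M K_a\Diamond^- f$. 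Monotonicity then gives $K_a\Diamond^- f$ at every $j\ge j_0$, in particular at every $j\ge i+D$, so the alarm stays on.

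The main point needing care is the monotonicity step together with the strict first-time reading of $\Diamond_{[0,D]}$. In the ($\Rightarrow$) direction we must argue about the \emph{least} knowledge position rather than merely exhibit one. In the ($\Leftarrow$) direction we need persistence to obtain ``for all $j\ge i+D$''. The remaining details are routine: the alignment of positions, where $\Diamond^- f$ at $j$ includes position $j$ itself, and the fact that $\diag$ is well defined on observation sequences rather than on words.
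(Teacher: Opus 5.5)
Your proof is correct and follows the paper's argument: you use monotonicity of $K_a\Diamond^- f$ under perfect recall, take the canonical diagnoser that raises the alarm exactly where this knowledge formula holds, and for the converse use the no-false-alarm clause to force knowledge at position $i+D$. Your explicit argument that the least knowledge position at or after $i$ lies in $[i,i+D]$ under strict first-time semantics spells out a step the paper leaves implicit.
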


\begin{proof}
Write \(\alpha:=K_a\Diamond^- f\) for the alarm predicate.  It depends only on the
observation prefix, since indistinguishable positions have the same \(a\)-class,
and it is monotone under perfect recall (once the fault is known it stays known).
Since \(\Diamond_{[0,D]}\alpha\) requires \(\alpha\) at some position within \(D\)
steps and \(\alpha\) persists once true, \(\varphi_{\mathsf{diag}}^{\le D}\) requires
the alarm to hold from position \(i+D\) onward after any fault at \(i\).

Suppose the formula holds. Set \(\diag(u)=1\) iff \(\alpha\) holds after some
(equivalently every) trace prefix with observation \(u\), and \(\diag(u)=0\) for
observations not realized by any trace prefix.  A fault-free prefix is an
indistinguishable non-faulty witness, so \(\alpha\) is false there and there are no
false alarms.  After a fault at \(i\) the formula gives \(\alpha\) from \(i+D\)
onward, hence \(\diag(\obs_a(w_{\le j}))=1\) for all \(j\ge i+D\).

Conversely, let \(\diag\) be a \(D\)-delay diagnoser and let \(f\) occur at
position \(i\) of \(w\). Detection by the diagnoser gives \(\diag(\obs_a(w_{\le i+D}))=1\), so every
trace prefix \(w'_{\le i+D}\) with the same observation is faulty (otherwise the
no-false-alarm rule would force \(\diag=0\)).  Thus \(w,i+D\models\alpha\), which
lies in the window \([0,D]\), so \(\varphi_{\mathsf{diag}}^{\le D}\) holds.
\end{proof}

The unbounded eventual-diagnosis formula
\(
  \varphi_{\mathsf{diag}}
  :=
  \Box\bigl(f\rightarrow \Diamond K_a\Diamond^- f\bigr)
\)
drops the delay bound and captures the qualitative requirement.  The bound can
also be written with iterated next as \(\Box(f\rightarrow X^D K_a\Diamond^- f)\).
By monotonicity of \(K_a\Diamond^- f\), this is equivalent to
\(\varphi_{\mathsf{diag}}^{\le D}\), but it expresses the delay \(D\) in unary,
whereas the metric constant in \(\Diamond_{[0,D]}\) is binary.

\begin{example}[Diagnosis, perfect recall, and delay]
\label{ex:diagnosis}
Let $\AP=\{p,r,f\}$ and $\AP_a=\{p,r\}$, so the fault $f$ is unobservable.
Consider three systems whose trace languages are given by the following
$\omega$-regular expressions over $\Sigma=2^{\AP}$, in which each set, such as
$\{p\}$ or $\{p,f\}$, is a single letter:
\[
\begin{aligned}
L(\M_1) &= \{p\}^\omega + \{p\}^{*}\{p,f\}\{p\}^\omega,\\
L(\M_2) &= \{p\}^\omega + \{p\}^{*}\{f\}\{p\}^\omega,\\
L(\M_3) &= \{p\}^\omega + \{p\}^{*}\{p,f\}\{p\}^{*}\{r\}^\omega.
\end{aligned}
\]
Agent $a$ observes $\{f\}$ as $\emptyset$ and $\{p,f\}$ as $\{p\}$.
In $\M_1$, faulty and fault-free traces have the same observation
$\{p\}^\omega$, so the fault is never known:
$\M_1\not\models\varphi_{\mathsf{diag}}$. In $\M_2$, the fault position is
observed as $\emptyset$, which cannot occur on the fault-free trace. Hence the
fault is known immediately and
$\M_2\models\varphi_{\mathsf{diag}}^{\le 0}$. Perfect recall preserves this
knowledge after the observation returns to $\{p\}$, whereas a memoryless observer
would lose it. In $\M_3$, the fault becomes known only when the observation switches
to $\{r\}^\omega$. The switch occurs on every faulty trace, but after an
arbitrarily long prefix indistinguishable from the fault-free trace. Therefore
$\M_3\models\varphi_{\mathsf{diag}}$, while
$\M_3\not\models\varphi_{\mathsf{diag}}^{\le D}$ for every $D$.
\end{example}

\paragraph*{Decentralized diagnosis}
The translation extends to decentralized diagnosis. Suppose several agents
observe different projections, $f\notin\AP_a$ for every $a\in\Ag$, and no fixed
agent is required to diagnose every fault. Under the codiagnosability notion
of~\cite{QiuK06}, implemented by the disjunctive protocol
of~\cite{DeboukLT00}, a fault is detected once at least one agent can infer it
from its local observations.
A \(D\)-delay decentralized diagnoser is a family
\((\diag_a)_{a\in\Ag}\) of local alarm functions
\(\diag_a:(2^{\AP_a})^+\to\{0,1\}\), each subject to the no-false-alarm
clause~(i) above for its own observation alphabet, such that whenever
\(w\in L(\M)\) has a fault at position \(i\), then for every \(j\ge i+D\) at
least one \(a\in\Ag\) has \(\diag_a(\obs_a(w_{\le j}))=1\).

\begin{lemma}[Decentralized diagnosis as knowledge]
\label{lem:codiagnosis-equivalence}
For every \(D\in\Nat\), the model \(\M\) has a \(D\)-delay decentralized diagnoser
for fault \(f\) iff
\[
  \M\models
  \Box\left(
    f\rightarrow
    \Diamond_{[0,D]}\bigvee_{a\in\Ag}K_a\Diamond^- f
  \right).
\]
\end{lemma}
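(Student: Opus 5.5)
The plan is to follow the proof of Lemma~\ref{lem:diagnosis-equivalence} agent by agent. For each \(a\in\Ag\), write \(\alpha_a:=K_a\Diamond^- f\) and \(\beta:=\bigvee_{a\in\Ag}\alpha_a\). I will first record two facts about each \(\alpha_a\).

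\emph{Observation-dependence.} If \((w,j)\sim_a(w',j)\), then \(w,j\models_\M\alpha_a\) iff \(w',j\models_\M\alpha_a\). This holds because \(\sim_a\) is an equivalence relation on pointed words of \(L(\M)\), so both positions quantify over the same set of words.

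\emph{Monotonicity.} If \(w,j\models_\M\alpha_a\) and \(j'\ge j\), then \(w,j'\models_\M\alpha_a\). To see this, take \((w',j')\sim_a(w,j')\). Synchronous perfect recall gives \((w',j)\sim_a(w,j)\), so \(\Diamond^- f\) holds at \((w',j)\), and therefore also at \((w',j')\).

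Since each \(\alpha_a\) is monotone, so is the disjunction \(\beta\). Under strict first-time semantics, \(w,i\models_\M\Diamond_{[0,D]}\beta\) holds iff the first \(\beta\)-position at or after \(i\) lies in \([i,i+D]\). By monotonicity this is equivalent to \(w,i+D\models_\M\beta\), and also to \(\beta\) holding at every \(j\ge i+D\).

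\emph{Formula implies diagnoser.} Assume the formula holds. For each agent \(a\), define \(\diag_a(u)=1\) iff \(\alpha_a\) holds at \((w,|u|)\) for some, equivalently every, \(w\in L(\M)\) with \(\obs_a(\pref{w}{|u|})=u\). Set \(\diag_a(u)=0\) if no such \(w\) exists. Well-definedness is exactly observation-dependence.

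No false alarms: let \(u\) be a fault-free prefix of some \(w\in L(\M)\). Then \(w\) itself is an \(a\)-indistinguishable witness at which \(\Diamond^- f\) fails at position \(|u|\). Hence \(\alpha_a\) fails there, and \(\diag_a(\obs_a(u))=0\).

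Detection: suppose \(w\) has a fault at \(i\), that is, \(f\in\sigma_i\). The formula gives \(w,i\models_\M\Diamond_{[0,D]}\beta\). By the preliminary observation, \(\beta\) holds at every \(j\ge i+D\). So for each such \(j\) some agent \(a\) satisfies \(\alpha_a\) at \((w,j)\), and hence \(\diag_a(\obs_a(\pref{w}{j}))=1\).

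\emph{Diagnoser implies formula.} Let \((\diag_a)_a\) be a \(D\)-delay decentralized diagnoser, and let \(f\) occur at position \(i\) of \(w\in L(\M)\). Take \(j=i+D\). Some agent \(a\) has \(\diag_a(\obs_a(\pref{w}{j}))=1\).

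Now let \(w'\in L(\M)\) with \((w',j)\sim_a(w,j)\). If \(\pref{w'}{j}\) were fault-free, clause~(i) for \(\diag_a\) would force the value \(0\) on the same observation, a contradiction. So \(w',j\models_\M\Diamond^- f\), and therefore \(w,j\models_\M\alpha_a\), which gives \(\beta\) at \(j=i+D\).

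Thus a \(\beta\)-position exists at distance at most \(D\) from \(i\). The first such position is no later, so it also lies in \([0,D]\), and \(w,i\models_\M\Diamond_{[0,D]}\beta\). As \(i\) and \(w\) were arbitrary, \(\M\) satisfies the formula.

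The only delicate point is the interaction of the first-time semantics of \(\Diamond_{[0,D]}\) with a disjunction over several agents. The first \(\beta\)-position may be witnessed by one agent, while at later positions the requirement "for every \(j\ge i+D\) some agent alarms" is met by the same agent or a different one. Monotonicity of each \(\alpha_a\), and hence of \(\beta\), resolves this. I expect stating and using it cleanly to be the main step; the rest mirrors Lemma~\ref{lem:diagnosis-equivalence}.
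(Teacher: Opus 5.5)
Your proposal is correct and takes the paper's route: you define one alarm per agent firing exactly where $K_a\Diamond^- f$ holds, use the trace itself as the witness against false alarms, and use monotonicity of the disjunction to reduce $\Diamond_{[0,D]}$ under first-time semantics to truth at position $i+D$. You also spell out the well-definedness and monotonicity arguments that the paper leaves implicit by deferring to Lemma~\ref{lem:diagnosis-equivalence}.
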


\begin{proof}
The argument is the same as that of \Cref{lem:diagnosis-equivalence}, now with one alarm
\(\diag_a\) per agent, firing on the prefixes where \(K_a\Diamond^-f\) holds.  The
disjunction provides detection by at least one agent within \(D\) steps, monotonicity
keeps that agent's alarm on, and the no-false-alarm requirement rules out an alarm along a
fault-free prefix.
\end{proof}

This formula uses several knowledge operators, but they are not nested.  Hence
decentralized diagnosis remains in our fragment $\KMTLone$.

\paragraph*{Opacity}
Let \(\mathit{sec}\in\AP\setminus\AP_a\) be an unobservable secret event.  The standard
trace view of opacity says that every secret observation prefix must also be
compatible with a non-secret prefix~\cite{BryansKMR08}.  For past-event
opacity this becomes:
\begin{quote}
for every \(w\in L(\M)\) and position \(i\), if
\(w,i\models\Diamond^-\mathit{sec}\), then there exists \(w'\in L(\M)\) with
\((w,i)\sim_a(w',i)\) and \(w',i\not\models\Diamond^-\mathit{sec}\).
\end{quote}

\begin{lemma}[Opacity as lack of knowledge]
\label{lem:opacity-equivalence}
System \(\M\) satisfies the trace condition above iff
\[
  \M\models \Box\neg K_a\Diamond^-\mathit{sec}.
\]
\end{lemma}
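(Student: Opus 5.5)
The plan is to unfold both sides into statements about positions of traces and check that they match clause by clause. By Definition~\ref{def:mc}, $\M\models\Box\neg K_a\Diamond^-\mathit{sec}$ means that $w,1\models_\M\Box\neg K_a\Diamond^-\mathit{sec}$ for every $w\in L(\M)$. We first note that $\Box\psi=\neg(\top\,U^1_{\Nat}\,\neg\psi)$ holds at $(w,1)$ exactly when $w,i\models_\M\psi$ for every $i\ge1$. The reason is that $\top\,U^1_{\Nat}\,\neg\psi$ asks for a first position $j\ge1$ where $\neg\psi$ holds. The conjuncts $\neg\neg\psi$ before $j$ are automatic by the choice of $j$ as the first such position, and the side condition $\top$ is trivial. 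Hence the formula is equivalent to the statement that for all $w\in L(\M)$ and all $i\ge1$ we have $w,i\not\models_\M K_a\Diamond^-\mathit{sec}$.

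Next we unfold the knowledge operator. Its semantic clause gives that $w,i\not\models_\M K_a\Diamond^-\mathit{sec}$ iff there is $w'\in L(\M)$ with $(w,i)\sim_a(w',i)$ and $w',i\not\models_\M\Diamond^-\mathit{sec}$. So the formula says: for every $w\in L(\M)$ and every $i$, there exists an $a$-indistinguishable $w'\in L(\M)$ at position $i$ that has no secret in its prefix. For completeness we can also unfold $\Diamond^-\mathit{sec}=\top\,S^1_{\Nat}\,\mathit{sec}$. It holds at $(w',i)$ iff some $j\le i$ has $\mathit{sec}\in\sigma'_j$, using the same ``most recent occurrence'' argument as above.

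It remains to compare this with the trace condition. The formula's condition quantifies over all $(w,i)$, while the trace condition is restricted to those $(w,i)$ with $w,i\models\Diamond^-\mathit{sec}$. The formula's condition therefore trivially implies the trace condition. For the converse, we handle the case $w,i\not\models\Diamond^-\mathit{sec}$ by taking $w'=w$. Since $\sim_a$ is reflexive, this choice is indistinguishable from $(w,i)$ and has no past secret, so the required witness exists. Combining the two directions gives the equivalence.

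There is no substantial obstacle here. The only points needing care are:
\begin{enumerate}[(i)]
\item the first-time semantics of $U^1$ and $S^1$ with interval $\Nat$ really coincide with the classical $\Box$ and $\Diamond^-$, which follows from the first-occurrence argument above;
\item the reflexivity case for secret-free positions, which closes the gap between the two quantifications.
\end{enumerate}
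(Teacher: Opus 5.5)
Your proof is correct and follows essentially the same route as the paper's: unfold $\neg K_a$ into the existence of an indistinguishable witness without a past secret, and cover positions without $\Diamond^-\mathit{sec}$ by taking $w'=w$. The paper leaves implicit that $\Box$ with strict first-time semantics over $\Nat$ means ``at every position''; your first-occurrence argument spells this out, and unfolding $\Diamond^-\mathit{sec}$ is not needed since the trace condition is already stated in terms of it.
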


\begin{proof}
The proof is immediate: at a position with \(\Diamond^-\mathit{sec}\),
the formula \(\neg K_a\Diamond^-\mathit{sec}\) states exactly the existence of an
indistinguishable non-secret witness, and at a position without
\(\Diamond^-\mathit{sec}\) it holds trivially, witnessed by the trace itself.
\end{proof}

The formula contains no metric operator. Recall that, over dense-time models with
observable time stamps, however, the corresponding requirement is already
undecidable by reduction from timed automaton universality~\cite{AlurD94,Cassez09}.
Discrete time is therefore an important restriction.

The formula $\Box\neg K_a\Diamond^-\mathit{sec}$ prevents the observer from
becoming certain that the secret occurred, but it does not prevent certainty
that it did not occur. Two-sided opacity is expressed by
\[
  \Box\neg\bigl(K_a\Diamond^-\mathit{sec}\vee
                 K_a\neg\Diamond^-\mathit{sec}\bigr).
\]
For a confidential binary value encoded at the beginning of a run, this
requires both values to remain compatible with the observations. Metric
variants can restrict the protection requirement to a bounded window.

\paragraph*{Alternation depth one}
The formulas above contain no nesting between different agents' knowledge.
They therefore belong to $\KMTLone$. The lower bound shows that adding past to
this shallow epistemic fragment already yields \EXPSPACE-hardness.

\section{EXPSPACE Lower Bound}
\label{sec:lower-bound}

We reduce from exponential-width corridor tiling. An instance asks for a tiling
of a grid of width $2^m$ and arbitrary finite height, with matching horizontal
and vertical colors and prescribed corner tiles. The model emits the grid in
row-major order, from the bottom row upward. Each cell is represented by a block
containing an $m$-bit column counter and a tile. After the last row, the trace enters a terminal
$\#^\omega$ loop. A single agent $a$ observes all propositions except two
auxiliary tags. See Figure~\ref{fig:hidden-tag-column-check} for an illustration.

The counter, corner constraints, and horizontal compatibility are checked on
the visible trace by a polynomial-size formula. Vertical compatibility is the
only condition involving cells separated by an entire row. Their distance is
$2^m$ cells, so expanding this distance with next operators would require an
exponential-size formula, and a metric operator does not provide the direct
comparison either (cf.\ Remark~\ref{rem:metric-free}).

We check vertical compatibility with one knowledge operator. Two propositions
$\Lo$ and $\Up$ are hidden from $a$ and may each mark at most one cell start. A
placement with $\Lo$ on a cell and $\Up$ on the cell in the next row selects a
vertical edge. At the first $\#$-position, all placements over the same visible
prefix are indistinguishable to $a$. Thus $K_a$ quantifies over every candidate
placement. A bad vertical edge yields an indistinguishable placement for which
the test fails. Past operators compare the counter at the $\Lo$-cell with later
cells: the first later occurrence of the same counter is the cell in the next
row. Figure~\ref{fig:hidden-tag-column-check} shows the construction.

\begin{figure}[t]
\centering
\resizebox{.96\textwidth}{!}{%
\begin{tikzpicture}[
  x=1cm,y=1cm,
  font=\scriptsize,
  cell/.style={draw,minimum width=.82cm,minimum height=.55cm,align=center},
  tag/.style={circle,draw,fill=white,inner sep=1pt,font=\tiny},
  arr/.style={-{Stealth[length=2mm]},thick},
  note/.style={align=center}
]
  \node[cell]    (a0) at (0.50,0) {$0$};
  \node[cell]    (a1) at (1.40,0) {$1$};
  \node at (2.05,0) {$\cdots$};
  \node[cell] (ai) at (2.80,0) {$i$};
  \node at (3.45,0) {$\cdots$};
  \node[cell]    (am) at (4.40,0) {$2^m\!\!-\!\!1$};
  \node[cell]   (b0) at (5.50,0) {$0$};
  \node[cell]   (b1) at (6.40,0) {$1$};
  \node at (7.05,0) {$\cdots$};
  \node[cell] (bi) at (7.80,0) {$i$};
  \node at (8.45,0) {$\cdots$};
  \node[cell]   (bm) at (9.40,0) {$2^m\!\!-\!\!1$};
  \node[cell]   (e1) at (10.50,0) {$\#$};
  \node[cell]   (e2) at (11.40,0) {$\#$};
  \node[anchor=west] at (11.85,0) {$\cdots$};

  \node[tag] at (ai.north west) {$\Lo$};
  \node[tag] at (bi.north west) {$\Up$};

  \draw[arr] (ai.north) to[bend left=28] (bi.north);
  \node[note] at (5.30,1.55)
    {first later cell with the same counter $\;=\;$ the cell one row above};

  \draw[decorate,decoration={brace,amplitude=4pt,mirror}]
    (a0.south west) -- (am.south east)
    node[midway,below=4pt] {row $j$ \;(counter runs $0,\dots,2^m\!\!-\!\!1$)};
  \draw[decorate,decoration={brace,amplitude=4pt,mirror}]
    (b0.south west) -- (bm.south east)
    node[midway,below=4pt] {row $j{+}1$};

  \node[note,anchor=south] (endnote) at (10.50,0.78) {first $\#$:\\evaluate here};
  \draw[arr] (endnote.south) -- (e1.north);
\end{tikzpicture}%
}
\caption{Hidden-tag test on a row-major trace. The column counter cycles
through $0,\dots,2^m\!-\!1$ in each row, so the next cell with the same counter
is in the next row. The hidden tags $\Lo$ and $\Up$ select a candidate vertical
edge. At the first $\#$-position, knowledge quantifies over all tag placements,
where $\Pair$ identifies actual vertical edges and $\Good$ checks their colors.}
\label{fig:hidden-tag-column-check}
\end{figure}

We state the reduction here and give the formulas and their semantic proofs in
Appendix~\ref{sec:lower-bound-details}. The source problem is the standard
exponential-width corridor-tiling problem (cf.~\cite{LMS2002}). An
instance
\[
  \mathcal{T}=(C,T,m,d_{\mathsf{init}},d_{\mathsf{final}})
\]
is given by a finite color set $C$, a set $T\subseteq C^4$ of tiles, each with a bottom, left, top, and right color, a width parameter $m\ge 1$ in unary, and two corner
dominoes $d_{\mathsf{init}},d_{\mathsf{final}}\in T$.  It asks for a height $n\ge 1$
and a tiling of \(\{0,\ldots,2^m-1\}\times\{0,\ldots,n-1\}\) by dominoes from $T$ so
that adjacent tiles agree on the color of their shared edge, the bottom-left
tile is $d_{\mathsf{init}}$, and the top-right tile is $d_{\mathsf{final}}$.
Since $m$ is in unary, the width $2^m$ is exponential in the size of $\mathcal{T}$,
and the problem is \EXPSPACE-complete.  The structure follows the tiling reduction of~\cite{LMS2002}. Their
forgettable-past operator is used to compare cells in adjacent rows, whereas our
reduction instead uses hidden tags quantified by knowledge.

For such an instance we build a polynomial-size tagged B\"uchi automaton
\(\M_{\mathcal{T}}^{\mathsf{tag}}\) in which all states are accepting.  Its visible propositions
are the counter bits, the domino propositions, the terminal marker \(\#\), and a
marker \(\End\) that flags the first \(\#\)-position.  The only invisible propositions
are \(\Lo\) and \(\Up\).  The system ensures that the visible trace is a sequence of
complete cell blocks followed, when it stops, by \(\#^\omega\), that \(\End\) holds
exactly at the first \(\#\), and that each hidden tag occurs at most once and only
at a cell start.  Apart from those regular restrictions, the tags can be assigned
arbitrarily.

The reduction formula is
\[
  \PhiI := \Base\wedge\Diamond(\End\wedge\Col),
  \qquad
  \Col := K_a(\Pair\rightarrow\Good).
\]
The visible marker $\End$ occurs at the first $\#$-position, where $\Col$ is
evaluated. The formula $\Base$ checks the two corners, the modulo-$2^m$ column
counter, and horizontal compatibility. The formula $\Pair$ holds when
$\Lo$ marks a cell and $\Up$ marks the first later cell with the same counter.
Under $\Base$, these cells are vertical neighbors. The formula $\Good$
checks their tile colors. All formulas have size polynomial in $m+|T|$, and
their definitions are given in Appendix~\ref{sec:lower-bound-details}.

\begin{lemma}[Hidden-tag column test]
\label{lem:knowledge-column}
Let \(w\in L(\M_{\mathcal{T}}^{\mathsf{tag}})\) such that \(w,1\models\Base\),
and let \(e\) be the first \(\#\)-position.  Then
\[
  w,e\models K_a(\Pair\rightarrow\Good)
\]
iff all vertical neighbors in the visible grid prefix of \(w\) are compatible.
\end{lemma}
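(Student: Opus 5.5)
The proof splits into the two directions of the biconditional, after fixing the set of witnesses that $K_a$ ranges over. A pointed word $(w',e)$ with $w'\in L(\M_{\mathcal{T}}^{\mathsf{tag}})$ and $(w,e)\sim_a(w',e)$ agrees with $w$ on all visible propositions up to $e$. This includes $\#$ and $\End$, so $e$ is also the first $\#$-position of $w'$, and $w'$ has the same visible grid prefix. All states of the tagged automaton are accepting, and its only constraints on tags are the regular ones (each of $\Lo$ and $\Up$ occurs at most once and only at a cell start). We will therefore argue a \emph{realizability claim}: for any two cell starts $p<q$ of the grid prefix of $w$, the word obtained from $w$ by removing its tags and placing $\Lo$ at $p$ and $\Up$ at $q$ lies in $L(\M_{\mathcal{T}}^{\mathsf{tag}})$ and is $a$-indistinguishable from $w$ at $e$. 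Every other indistinguishable witness is again a valid tag placement over the same visible prefix. The suffix after $e$ is $\#^\omega$ in every case, so it plays no role for the past-oriented formulas.

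Next we use the appendix definitions of $\Pair$ and $\Good$ to establish two \emph{semantic characterizations}, both evaluated at the first $\#$-position $e$ of a word $w'$ whose visible part satisfies $\Base$. First, $w',e\models\Pair$ holds iff there are cell starts $p<q$ with $\Lo$ at $p$ and $\Up$ at $q$, such that $q$ is the first cell start after $p$ carrying the same counter value. This is where the past operators with strict first-time semantics are used: from the $\Up$-position, "the most recent $\Lo$" is located with $S^1$, and equality of the $m$ counter bits is expressed bitwise with $\Diamond^-$ relative to the $\Lo$ position. Second, using $\Base$, the counter increments by one modulo $2^m$ from cell to cell and restarts at $0$ at row boundaries. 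Hence the first later cell with the same counter value is exactly the cell directly above, i.e.\ $(i,j)\mapsto(i,j+1)$. Under these conditions, $\Good$ holds iff the top color of the domino at $p$ equals the bottom color of the domino at $q$. That is, $\Good$ is a disjunction over compatible tile pairs, each located via $\Diamond^-$ with the tags.

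With these in hand, both directions follow. For ($\Leftarrow$), suppose every vertical edge is compatible, and take any witness $w'$. If $\Pair$ holds at $e$, its tags sit on vertical neighbors of the common visible grid by the first characterization, so $\Good$ holds by the second. For ($\Rightarrow$), suppose the cells at $(i,j)$ and $(i,j+1)$ are incompatible. The realizability claim gives a witness $w'$ with $\Lo$ on the first of these cells and $\Up$ on the second. At $e$, this $w'$ satisfies $\Pair$ but not $\Good$, so $K_a(\Pair\rightarrow\Good)$ fails.

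I expect the main obstacle to be the first characterization. We must ensure that $\Pair$ cannot be satisfied spuriously, for example with $\Up$ before $\Lo$, with tags on non-cell-start positions, or with a missing tag. We must also check that "first later cell with the same counter" is expressible in polynomial size, since it involves comparing $m$ bits across a gap of $2^m$ cells. I will first try formulating it at the $\Up$ cell as "for each bit $b$, the value of $b$ at the last $\Lo$ equals its current value, and no intermediate cell start since $\Lo$ had all bits equal". The latter condition is enforced by the strict first-time semantics of $S^1$, which looks back to the most recent matching position, and I will verify it carefully against the appendix formulas.
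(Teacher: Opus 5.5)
Your proposal is correct and follows the paper's proof: your realizability claim is Lemma~\ref{lem:tag-completeness}, your two semantic characterizations are Lemmas~\ref{lem:pair-selects} and~\ref{lem:good-checks}, and you combine the two directions exactly as the paper does. One correction for when you check the appendix formulas: $\Pair$ finds the $\Lo$-cell with $\Diamond^{-}$ from $e$ and then moves forward with $U^1_{\Nat}$ to the $\Up$-cell, and earlier cells with the same counter are excluded by the left argument $\neg\SameL$, not by strict first-time semantics.
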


\begin{proof}[Proof sketch]
If some vertical edge is bad, place \(\Lo\) on its lower endpoint and \(\Up\) on
its upper endpoint.  This changes only invisible propositions, so the resulting
word is indistinguishable for \(a\) at \(e\).  In that word \(\Pair\) holds and
\(\Good\) fails, hence \(K_a(\Pair\rightarrow\Good)\) is false.  Conversely, if
all vertical edges are good, then every indistinguishable word differs only by
the hidden tag placement.  Whenever its tags satisfy \(\Pair\), they select a real
vertical edge, and that edge satisfies \(\Good\).
\end{proof}

\begin{theorem}[Reduction correctness]
\label{thm:reduction-correctness}
The tiling instance \(\mathcal{T}\) has a valid tiling of width \(2^m\) and some height
\(n\ge 1\) iff there exists \(w\in L(\M_{\mathcal{T}}^{\mathsf{tag}})\) such that
\[
  w,1\models_{\M_{\mathcal{T}}^{\mathsf{tag}}}\PhiI .
\]
\end{theorem}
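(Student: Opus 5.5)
The plan is to prove the two directions separately, using Lemma~\ref{lem:knowledge-column} for the vertical constraints and the (appendix) specification of $\Base$ for everything else. I take as given the intended properties of $\Base$: for $w\in L(\M_{\mathcal{T}}^{\mathsf{tag}})$, we have $w,1\models\Base$ iff the visible prefix of $w$ before $\#$ is a sequence of cell blocks whose counters run $0,1,\dots,2^m-1$ cyclically starting at $0$, ending with a block of counter $2^m-1$ (so the prefix consists of $n\ge1$ complete rows). In addition, the first cell carries $d_{\mathsf{init}}$, the last cell before $\#$ carries $d_{\mathsf{final}}$, and consecutive cells within a row are horizontally compatible. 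The formula $\Base$ must also force that $\#$ is eventually reached, i.e.\ the grid is finite. If the appendix version of $\Base$ does not itself contain $\Diamond\End$, this is supplied by the conjunct $\Diamond(\End\wedge\Col)$ of $\PhiI$ anyway.

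($\Rightarrow$) Given a valid tiling $\tau$ of height $n$, I build the word $w$ that emits the rows $0,\dots,n-1$ bottom-up in row-major order. Each cell is a block carrying the binary counter of its column and the domino $\tau(x,y)$, followed by $\End\wedge\#$ at the first terminal position and $\#^\omega$ afterwards, with no hidden tags at all. Since this respects the regular restrictions of the model and all states are accepting, $w\in L(\M_{\mathcal{T}}^{\mathsf{tag}})$. The corner, counter and horizontal conditions of $\tau$ give $w,1\models\Base$. The first $\#$-position $e$ exists and satisfies $\End$. Since $\tau$ is vertically compatible, Lemma~\ref{lem:knowledge-column} yields $w,e\models\Col$, hence $w,1\models\Diamond(\End\wedge\Col)$ and $w,1\models\PhiI$.

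($\Leftarrow$) Let $w\in L(\M_{\mathcal{T}}^{\mathsf{tag}})$ with $w,1\models\PhiI$. From $\Diamond(\End\wedge\Col)$ there is a position $e$ with $\End$, which by the model's construction is exactly the first $\#$-position, and $w,e\models\Col$. By $\Base$, the visible prefix before $e$ is a sequence of $n\ge1$ complete rows of $2^m$ cells. Reading the dominoes off the blocks therefore defines a map $\tau:\{0,\dots,2^m-1\}\times\{0,\dots,n-1\}\to T$ with the correct corners and horizontal compatibility. Applying Lemma~\ref{lem:knowledge-column} at $e$ gives vertical compatibility, so $\tau$ is a valid tiling.

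The main obstacle is the bookkeeping that the visible prefix really decodes to a rectangular grid: the counter must start at $0$, increment modulo $2^m$, and the trace must stop right after a block with counter $2^m-1$, so that the bottom-left and top-right cells are well defined and row membership matches the counter cycle. I would state this as an explicit auxiliary claim about $\Base$, proved in the appendix alongside its definition, and otherwise rely entirely on Lemma~\ref{lem:knowledge-column}. A minor point to check is that the hidden tags in $w$ itself are irrelevant: $\Base$ mentions only visible propositions, and the lemma already quantifies over all tag placements.
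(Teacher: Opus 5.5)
Your proposal is correct and follows essentially the same route as the paper's proof. In both directions you encode (or decode) the grid row by row; your choice of placing no tags is one legal placement. You then use the meaning of $\Base$ for the corners, counter and horizontal constraints, and Lemma~\ref{lem:knowledge-column} at the first \#-position for the vertical ones. The auxiliary claim about $\Base$ that you plan to state is exactly the paper's Lemma~\ref{lem:base-meaning}. That lemma already forces the existence of the first \#-position through the $\NextCell(\End)$ conjunct, so your hedge about $\Diamond(\End\wedge\Col)$ is harmless but unnecessary.
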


\begin{proof}[Proof sketch]
If a valid tiling exists, encode it row by row, followed by \(\#^\omega\), and
choose any legal hidden tag placement.  The visible part satisfies \(\Base\).
All vertical adjacencies are compatible, so Lemma~\ref{lem:knowledge-column}
gives \(\Col=K_a(\Pair\rightarrow\Good)\) at the first \(\#\)-position.  Conversely, if some trace satisfies
\(\PhiI\), then \(\Base\) extracts a row-major grid with the right corners and
all horizontal constraints, while Lemma~\ref{lem:knowledge-column} provides all
vertical constraints.  Thus the visible prefix before the first \(\#\)-position
is a valid tiling.
\end{proof}

\begin{theorem}[\EXPSPACE-hardness]
\label{thm:expspace-hardness}
Model checking for $\KMTLone$ over finite B\"uchi automata under
synchronous perfect-recall knowledge is \EXPSPACE-hard.  The lower bound already
holds when all states of the automaton are accepting and the formula has one
agent, one occurrence of $K_a$, all past operators inside the scope of $K_a$,
and only the unbounded interval~$\Nat$.
\end{theorem}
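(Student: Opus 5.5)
The plan is to reduce from the complement of the tiling problem, using Theorem~\ref{thm:reduction-correctness}. That theorem characterises tileability as an \emph{existential} question: some \(w\in L(\M_{\mathcal{T}}^{\mathsf{tag}})\) satisfies \(w,1\models\PhiI\). The model-checking problem of Definition~\ref{def:mc} is universal, so I will output the instance \(\M_{\mathcal{T}}^{\mathsf{tag}}\models\neg\PhiI\). This instance holds iff no trace satisfies \(\PhiI\), which by Theorem~\ref{thm:reduction-correctness} holds iff \(\mathcal{T}\) has no valid tiling.

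Exponential-width corridor tiling is \EXPSPACE-complete, so its complement is \EXPSPACE-complete as well, since \EXPSPACE is closed under complement. Hence \EXPSPACE-hardness of model checking follows, provided the map \(\mathcal{T}\mapsto(\M_{\mathcal{T}}^{\mathsf{tag}},\neg\PhiI)\) is computable in polynomial time. Negation does not affect membership in \(\KMTLone\), the number of knowledge operators, or the intervals used, so the structural claims carry over from \(\PhiI\) to \(\neg\PhiI\).

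The remaining work is to check, against the construction described before Lemma~\ref{lem:knowledge-column}, the restrictions listed in the statement.
\begin{enumerate}
\item \emph{Size.} \(\M_{\mathcal{T}}^{\mathsf{tag}}\) and \(\PhiI\) have size polynomial in \(m+|T|\), with \(m\) given in unary.
\item \emph{Acceptance.} All states of \(\M_{\mathcal{T}}^{\mathsf{tag}}\) are accepting, as stated in its description.
\item \emph{One agent, one \(K_a\).} There is a single agent \(a\), and the only knowledge operator is the one in \(\Col\). Therefore \(\neg\PhiI\) is trivially agent-alternation-free.
\item \emph{Intervals.} Only \(\Nat\) occurs, through \(\Diamond\), \(\Box\), \(\Diamond^-\), and unbounded \(U^1\) and \(S^1\).
\item \emph{Placement of past operators.} The formula \(\Base\) is purely future: corners, the counter, and horizontal compatibility are all checkable with \(X\), \(U^1_{\Nat}\) and \(\Box\). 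All past operators occur in \(\Pair\) and \(\Good\), which compare the counter at the \(\Lo\)-cell with later cells, and these lie under \(K_a\).
\end{enumerate}

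The main obstacle is the last of these, which depends on the appendix definitions. If \(\Base\) as written used a past operator, I would rewrite it. At position~\(1\), any \(Y\) can be replaced by \(X\) shifted to the preceding cell, because the counter increment and horizontal matching relate a cell to its successor block of fixed polynomial length. I would also need the polynomial-time construction of the automaton enforcing well-formed blocks, the \(\End\) marker, and at-most-once tags. This construction is regular and straightforward, so I would only sketch it.
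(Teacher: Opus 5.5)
Your proposal is correct and follows the paper's proof. You output the instance $\M_{\mathcal{T}}^{\mathsf{tag}}\models\neg\PhiI$, use Theorem~\ref{thm:reduction-correctness} to see that it holds iff $\mathcal{T}$ has no valid tiling, conclude by closure of \EXPSPACE under complement, and read the size and structural restrictions off the construction; your fallback for past operators in $\Base$ is not needed, since $\Base$ uses only $X$, $\Diamond$, and $\Box$, so all past operators lie in $\Pair$ and $\Good$ under the single $K_a$.
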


\begin{proof}
The construction $\mathcal{T}\mapsto\M_{\mathcal{T}}^{\mathsf{tag}}$ is polynomial in the size of the
tiling instance.  The formula $\PhiI$ has size polynomial in $m+|T|$: all
iterations $X^k$ satisfy $k\le m+1$, the same-column test has $m$ conjuncts, and
$\Good$ has at most $|T|^2$ disjuncts.

By Theorem~\ref{thm:reduction-correctness}, $\mathcal{T}$ has a valid tiling iff there is
some trace satisfying $\PhiI$.  Therefore, for the model-checking
problem,
\[
  \mathcal{T}\text{ is positive}
  \quad\Longleftrightarrow\quad
  \M_{\mathcal{T}}^{\mathsf{tag}}\not\models \neg\PhiI.
\]
Since the source tiling problem is \EXPSPACE-complete and \EXPSPACE is closed
under complement, model checking is \EXPSPACE-hard.
\end{proof}

The formula also satisfies the syntactic restriction stated in
\Cref{thm:expspace-hardness}: all past operators occur inside the sole
$K_a$. The evaluation point is marked by $\End$ rather than by a past
modality.
\begin{remark}[Metric-freeness]
\label{rem:metric-free}
With standard metric until, a binary constant for the distance between two
vertically adjacent cell blocks would allow a direct comparison. Strict
first-time semantics does not provide such random access: a formula
$\Diamond_{[k,k]}\psi$ also requires the position at distance $k$ to be the
first future $\psi$-position. Accordingly, the direct exact-distance encoding
is unavailable. This is consistent with the \PSPACE upper bound for satisfiability of the knowledge-free
strict first-time fragment~\cite{Alsmann025}. Our reduction uses past and
knowledge instead of positive metric constants.
\end{remark}

\section{EXPSPACE Upper Bound}
\label{sec:upper-bound}

The future-only strict first-time metric logic has a \PSPACE satisfiability
procedure based on exponential-size temporal automata~\cite{Alsmann025}. We
extend this construction with forward-propagated counters for past operators.
Since the target bound is
exponential space, it is enough to obtain exponential-size pointed automata
whose local predicates are testable in polynomial space.

Let $\theta$ be knowledge-free. The automaton $A_\theta$ stores one counter for
each metric scheme, namely a metric subformula with its finite bound replaced
by a placeholder (Definition~\ref{def:upper-schemes}). A scheme is \emph{future}
or \emph{past} according to whether it comes from an until or a since operator,
and \emph{upper-bound} or \emph{lower-bound} according to whether its interval
bounds the distance to the witness from above or from below. Under strict
first-time semantics, all concrete bounds for the same scheme are determined by
one witness distance. Each counter uses $O(\log B)$ bits, where $B$ is the largest
finite bound, and there are polynomially many schemes. Lower-bound and
unbounded future obligations are handled by a generalized B\"uchi condition,
which is then converted into an ordinary B\"uchi condition. Appendix
\ref{sec:upper-bound-details} gives the construction.

After reading $\pref{w}{i}$, the automaton state contains the temporal
information relevant at position $i$. A set $C_\theta$ marks the states in
which $\theta$ holds. Since $A_\theta$ is nondeterministic, truth at a pointed
word is characterized by an accepting run whose state at the distinguished
position belongs to $C_\theta$.

\paragraph*{Past operators}
For a past scheme $\gamma$ with arguments $\varphi$ and $\psi$, write
$c_i(\gamma)$ for its counter after reading position~$i$. It records the capped
distance from position $i$ back to the most recent $\psi$-position, provided
that $\varphi\wedge\neg\psi$ holds at every intervening position. At position $1$, the counter is
$0$ if $\psi$ holds and $\bot$ otherwise. For $i\ge 1$,
\[
  c_{i+1}(\gamma)=
  \begin{cases}
    0 & \text{if }\psi\text{ holds at position }i+1,\\
    c_i(\gamma)\oplus_\gamma 1
      & \text{if }\varphi\wedge\neg\psi\text{ holds at position }i+1,\\
    \bot & \text{otherwise},
  \end{cases}
\]
where $z\oplus_\gamma 1$ is the increment of $z$ capped so as never to exceed
the largest finite bound $B$, with $\bot\oplus_\gamma 1=\bot$. For an upper-bound
scheme it becomes $\bot$ once the distance exceeds $B$, in which case $\bot$ represents
either the absence of a witness or a witness beyond every relevant upper bound.
For a lower-bound scheme it saturates at $B$.

Unlike a future counter, a past counter is determined by the prefix already
read. Its value is fixed at position $1$ and propagated forward, so past
schemes need no fairness condition. The B\"uchi condition is used only for
future lower-bound schemes, where it rules out runs that keep an obligation
active forever without reaching its witness.

Counting per scheme rather than per interval bound is the essential step in
building the automaton. For future operators, this counting is due to~\cite{Alsmann025}.
We extend it to past operators, which the
applications require and which their future-only construction does not cover.
The extension keeps $A_\theta$ exponential in size and its local tests in
polynomial space, as stated in the following lemma.

\begin{lemma}[Pointed B\"uchi automata for the knowledge-free fragment]
\label{lem:kfree-test-automata}
Let $\AP'$ be a finite set of propositions and let $\theta$ be a knowledge-free
KMTL formula over $\AP'$ with syntactically represented, binary-encoded metric
intervals. There is a B\"uchi automaton
\[
  A_\theta=(S_\theta,S_{0,\theta},\delta_\theta,F_\theta)
\]
over $2^{\AP'}$, together with a set $C_\theta\subseteq S_\theta$, such that:
\begin{enumerate}[(i)]
\item $|S_\theta|\le 2^{\mathsf{poly}(|\theta|)}$ for a fixed polynomial
  $\mathsf{poly}$. States and the predicates for initial, transition,
  accepting, and marked states are representable and testable in space
  polynomial in $|\theta|+|\AP'|$.
\item For every $w\in(2^{\AP'})^\omega$ and every $i\ge1$,
  \[
    w,i\models\theta
    \quad\Longleftrightarrow\quad
    \begin{array}{l}
    \text{some accepting run }s_0s_1s_2\ldots\text{ of }A_\theta
    \text{ over }w\\
    \text{satisfies }s_i\in C_\theta.
    \end{array}
  \]
\end{enumerate}
\end{lemma}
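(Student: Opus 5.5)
I will build $A_\theta$ as a Fischer--Ladner style \emph{guessing} automaton over the closure of $\theta$, augmented with one counter per metric scheme, and then verify correctness by a standard ``run = labelling'' argument.

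\emph{Closure and states.} Let $\mathrm{cl}(\theta)$ be the set of subformulas of $\theta$ and their negations, and group the metric subformulas into schemes as in the paragraph above: two occurrences $\varphi\,U^1_I\,\psi$ and $\varphi\,U^1_{I'}\,\psi$ (resp.\ $S^1$) belong to the same scheme. A state is a triple $(\mathit{prev},H,c)$. Here $H\subseteq\mathrm{cl}(\theta)$ is a maximally consistent set of subformulas guessed to hold at the current position. The component $\mathit{prev}$ is one bit recording whether a position has already been read; it serves $Y$ and the initialisation of past counters. The component $c$ assigns to each scheme a value in $\{0,\ldots,B\}\cup\{\bot\}$. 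The initial state is a dummy $s_0$ with $\mathit{prev}=0$. Since there are polynomially many subformulas and schemes, and each counter has $O(\log B)$ bits (polynomial in $|\theta|$ as $B$ is binary), we get $|S_\theta|\le 2^{\mathsf{poly}(|\theta|)}$, and each state is a polynomial-size string. Put $C_\theta=\{s: \theta\in H_s\}$.

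\emph{Transitions.} A transition $s\xrightarrow{\sigma}s'$ is allowed iff the following hold. First, $H'$ agrees with $\sigma$ on propositions and is Boolean-consistent. Second, $Y\varphi\in H'$ iff $\mathit{prev}'=1$ and $\varphi\in H$. Third, $X\varphi\in H$ iff $\varphi\in H'$, which is checked one step late; the first state after $s_0$ imposes no obligation from $s_0$.

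Past counters are updated deterministically from $H'$ by the displayed recurrence, starting with $0$ or $\bot$ at position $1$. Membership of $\varphi\,S^1_I\,\psi$ in $H'$ is then forced to equal $c'(\gamma)\in I$. This is sound because, for an upper-bound interval $[\ell,h]$ with $h\le B$, the value $\bot$ means that no witness lies within distance $\le h$. For a lower-bound interval $[\ell,\infty)$, saturation at $B\ge\ell$ preserves the test.

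Future schemes are handled dually, following \cite{Alsmann025}. Once the argument $\psi$ of a future scheme is guessed false, the counter stores the remaining guessed distance to the first $\psi$-witness (or $\bot$ for ``none within $B$''). It is decremented each step while $\varphi\wedge\neg\psi$ must hold, and it must reach $0$ exactly at a $\psi$-position. Each $U^1_I$ formula of the scheme is then in $H$ iff the guessed distance lies in $I$. Future lower-bound and unbounded eventualities are left pending, and for each of them I add a generalized Büchi set of states in which that obligation is not pending. The round-robin conversion then gives $F_\theta$.

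All these checks are local tests on two polynomial-size states plus a letter, and they involve only comparisons and increments of binary counters. Hence they run in polynomial space, which gives (i).

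\emph{Correctness (ii).} For $\Leftarrow$, I show by induction on subformulas that along any accepting run, $\varphi\in H_{s_i}$ iff $w,i\models\varphi$. The Boolean, $X$, and $Y$ cases are immediate. For past schemes, I prove by induction on $i$ that $c_i(\gamma)$ equals the capped distance defined above, using the induction hypothesis for $\varphi,\psi$; then the membership constraint gives correctness. For future schemes, the counter discipline forces the first $\psi$-witness at the guessed distance, and the Büchi sets exclude runs where an unbounded or lower-bound obligation is postponed forever. For $\Rightarrow$, I take the canonical run $H_i=\{\varphi: w,i\models\varphi\}$ with the true counter values, and check that it satisfies every transition constraint and visits every acceptance set infinitely often; then $\theta\in H_i$ as required.

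The step I expect to be the main obstacle is the future-scheme bookkeeping. It must show that a single counter per scheme is consistent for all intervals of that scheme simultaneously, that capping at $B$ loses no needed information, and that the acceptance sets are correct when several nested obligations share counters. Here I would lean on the construction of \cite{Alsmann025} essentially verbatim, checking only that interleaving it with the deterministic past counters does not break their invariants. The past counters themselves are simple by comparison, since they need no fairness.
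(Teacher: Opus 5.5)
Your proposal is correct and is essentially the paper's construction: closure-based types with one capped counter per metric scheme, past counters computed deterministically forward from position~$1$, future counters guessed and forced backward from the first witness, and generalized B\"uchi sets only for lower-bound and unbounded future schemes. Soundness is by structural induction along an accepting run and completeness by the canonical truthful run, and your $\mathit{prev}$ bit and maximally consistent sets correspond to the paper's pre-initial state, prefix-initial types, and compositional $\Vdash$. The one point to state precisely is that schemes must be separated by bound type, which the paper does by normalizing $[\ell,h]$ into $[0,h]$ and $[\ell,\infty)$. Upper-bound counters drop to $\bot$ beyond $B$, and lower-bound counters saturate at $B$. A single counter over $\{0,\ldots,B\}\cup\{\bot\}$ shared by, say, $\varphi\,S^1_{[0,B]}\,\psi$ and $\varphi\,S^1_{[0,\infty)}\,\psi$ would have to conflate a witness beyond $B$ either with no witness or with a witness at distance exactly $B$.
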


The marked state alone does not ensure that the model and the test automaton
have compatible accepting continuations. Fix a model
$\M=(Q,Q_0,\Delta,F_\M)$ over $2^{\AP}$ and a knowledge-free $\theta$ over
$\AP$, so that Lemma~\ref{lem:kfree-test-automata} applies with $\AP'=\AP$.
Define
$\mathsf{Live}^\M_\theta\subseteq Q\times S_\theta$ by letting
$(q,s)\in\mathsf{Live}^\M_\theta$ if $s\in C_\theta$ and the product from
$(q,s)$ has an infinite path that visits both acceptance components infinitely
often. Its transitions are
\(
  (q,s)\to(q',s')
\)
whenever some $\sigma\in2^{\AP}$ satisfies
$(q,\sigma,q')\in\Delta$ and $(s,\sigma,s')\in\delta_\theta$. Membership is a
generalized B\"uchi nonemptiness test in the exponential product. Each product
state has polynomial size and the transition relation is polynomial-space
testable, so the test runs in polynomial space.

Consequently, a run prefix
\[
  q_0\xrightarrow{\sigma_1}q_1\cdots
  \xrightarrow{\sigma_i}q_i
\]
of $\M$ extends to an accepted word satisfying $\theta$ at position $i$ iff
$A_\theta$ has a run on $\sigma_1\ldots\sigma_i$ ending in some $s$ with
$(q_i,s)\in\mathsf{Live}^\M_\theta$. For formulas with past, the state $s$
contains information about the prefix that cannot be recovered from $q_i$.

\begin{example}[Why system states do not suffice]
\label{ex:past-needs-types}
Let $\AP=\{p\}$ and let $p$ be invisible to agent $a$, so
$\AP_a=\emptyset$. Consider a model with two initial branches
\[
  q_0\xrightarrow{\{p\}}q_1\xrightarrow{\emptyset}q,
  \qquad
  q'_0\xrightarrow{\emptyset}q'_1\xrightarrow{\emptyset}q,
\]
followed in both cases by the loop
$q\xrightarrow{\emptyset}q$. All states are accepting. After two letters, the
two runs have the same observation history and end in the same state $q$.
Nevertheless, $\Diamond^-p$ is true on the first run and false on the second.

An observer represented only by a set of model states maps both histories to
the information set $\{q\}$, thereby losing the distinction needed to evaluate
$K_a\Diamond^-p$. In the product observer, the two histories reach different
states of the temporal test automaton. Its information set contains two pairs
\[
  (q,s_{\mathsf{seen}})
  \quad\text{and}\quad
  (q,s_{\mathsf{unseen}}),
\]
where the second components record whether $p$ occurred earlier. This is why
the observers will use subsets of $Q\times S_{\neg\Diamond^-p}$ rather than of $Q$.
\end{example}

\paragraph*{Epistemic observers}
First consider $K_a\psi$ with knowledge-free $\psi$, and put $\chi=\neg\psi$.
For a realized observation prefix
$u\in(2^{\AP_a})^+$, define
\[
  O_{a,\psi}(u)\subseteq Q\times S_\chi
\]
as the set of pairs reachable by a run prefix of $\M$ with observation $u$ and
a simultaneous run prefix of $A_\chi$ on the same model letters
(Definition~\ref{def:observer} in the appendix). This is the perfect-recall information set
over $Q\times S_\chi$, rather than over $Q$ alone. The temporal component
distinguishes histories that reach the same model state but differ on past
formulas. The set is updated incrementally by a subset construction
(Lemma~\ref{lem:observer-recursion}).

For $w\in L(\M)$, $i\ge1$, and
$u=\obs_a(\pref{w}{i})$, the knowledge test is
\[
  w,i\models_\M K_a\psi
  \quad\Longleftrightarrow\quad
  O_{a,\psi}(u)\cap\mathsf{Live}^\M_\chi=\emptyset
\]
(see Lemma~\ref{lem:knowledge-test} in the appendix). A pair in the intersection represents an
indistinguishable prefix with an accepting continuation on which $\neg\psi$
holds at position $i$.

\paragraph*{Same-agent blocks}
Our fragment $\KMTLone$ excludes nesting between different agents, but permits repeated
nesting for the same agent. For fixed $a$, the truth of $K_a\xi$ is invariant on
an $a$-observation class:
\[
  (w,i)\sim_a(w',i)
  \quad\Longrightarrow\quad
  \bigl(w,i\models_\M K_a\xi\iff w',i\models_\M K_a\xi\bigr).
\]
We therefore process an $a$-local block from the inside out. For a knowledge
block $\eta=K_a\psi$, let $\widehat\psi$ be the formula obtained by replacing
the immediate inner knowledge subformulas by fresh propositions. Their
valuations depend only on the current observation prefix, and
$\widehat\psi$ is knowledge-free over the extended proposition set. The
counterexample automaton for $\eta$ is built for
$\chi_\eta:=\neg\widehat\psi$ and reads each model letter together with the
values of these fresh propositions provided by the inner observers.

\paragraph*{Relative liveness}
The preceding observer handles the prefix up to the current position. If
$\widehat\psi$ contains an inner knowledge proposition below a future
operator, its value is also needed along a possible counterexample
continuation. That value depends on the continuation of the observation
history, not only on the pair $(q,s)$.

For this reason, the liveness graph for $\eta=K_a\psi$ uses configurations
$(q,s,\mathcal I)$, where $\mathcal I$ contains the observers for every
knowledge subformula strictly inside $\psi$. All nesting levels are retained
because an observer can be updated only after the observers inside its own body
have provided their current truth values. A transition chooses a model letter
$\tau$, advances $q$, and updates $\mathcal I$ on $\obs_a(\tau)$ from the
innermost observer outward. The updated tuple $\mathcal I'$ determines the
valuation $\kappa_{\mathcal I'}$ of the immediate inner propositions read by
$A_{\chi_\eta}$. We write
\[
  \mathsf{Live}^\M_{a,\psi}(q,s,\mathcal I)
\]
when the resulting graph has a continuation from $(q,s,\mathcal I)$ that is
accepting in both $\M$ and $A_{\chi_\eta}$. At every future position, the inner
valuation is shared by exactly the runs with the same extended observation
prefix. If $\psi$ contains no knowledge subformula, $\mathcal I$ is empty and
the definition reduces to $\mathsf{Live}^\M_{\chi_\eta}(q,s)$. Formal
definitions and correctness are given in
Appendix~\ref{sec:upper-bound-details}.

\begin{example}[Two inner observers]
\label{ex:two-observers}
Consider
\[
  \Box\,
  \underbrace{K_a\bigl(\Diamond\,
    \underbrace{K_a(r \wedge \Diamond\,
      \underbrace{K_a p}_{x_2})}_{x_1}\bigr)}_{x_0}.
\]
Introduce fresh propositions from the innermost block outward:
\[
  x_0 := K_a(\Diamond\, x_1),
  \qquad
  x_1 := K_a(r \wedge \Diamond\, x_2),
  \qquad
  x_2 := K_a p,
\]
leaving the knowledge-free top formula
$\theta_{\mathsf{top}}=\Box\,x_0$. To evaluate $x_0$, write
$x_0=K_a\widehat\psi$ with $\widehat\psi=\Diamond x_1$. An indistinguishable
counterexample is recognized by $A_{\chi}$ for
$\chi=\neg\widehat\psi\equiv\Box\neg x_1$. As it moves forward, this
automaton needs the value of $x_1$ at every position, while
$x_1=K_a(r\wedge\Diamond x_2)$ in turn depends on $x_2$. Hence the relative
liveness state for the outer block carries both inner observers:
\[
  \mathcal I(u)=
  \bigl(O_{a,\;r\wedge\Diamond x_2}(u),\;O_{a,p}(u)\bigr).
\]
Here $O_{a,p}(u)\subseteq Q\times S_{\neg p}$ decides $x_2$, while
$O_{a,\;r\wedge\Diamond x_2}(u)\subseteq
Q\times S_{\neg(r\wedge\Diamond x_2)}$ decides $x_1$. A configuration of the relative-liveness graph for $x_0$ is
$(q,s,\mathcal I)$, where $s$ is a state of $A_\chi$. One future step on a
letter $\tau$ proceeds from the innermost block outward:
\begin{enumerate}[(1)]
\item advance $O_{a,p}$ on $\obs_a(\tau)$ and determine $x_2$.
\item advance $O_{a,\;r\wedge\Diamond x_2}$ on the letter enriched with the
  new value of $x_2$, and determine $x_1$.
\item feed $A_\chi$ the letter enriched with the new value of $x_1$.
\end{enumerate}
Thus $O_{a,\;r\wedge\Diamond x_2}$ cannot be advanced unless $O_{a,p}$ has
already supplied $x_2$. The resulting value of $x_1$ is then read by the
counterexample automaton used to decide $x_0$.
\end{example}

\paragraph*{Global search}
We refute $\M\models\varphi$ by searching for a word satisfying $\neg\varphi$.
Replace every maximal knowledge subformula $\eta$ of $\neg\varphi$ by a fresh
proposition $p_\eta$, obtaining the knowledge-free formula
$\theta_{\mathsf{top}}$. A global configuration is
\[
  (q,t,\mathcal O),
\]
where $q\in Q$, $t\in S_{\theta_{\mathsf{top}}}$, and $\mathcal O$ contains
the observer sets for all knowledge subformulas, including nonmaximal
descendants, which are needed to evaluate the maximal blocks. The knowledge tests compute
from $\mathcal O$ a valuation $\kappa_{\mathcal O}$ of the fresh
propositions. For a block $\eta=K_a\psi$, the proposition $p_\eta$ is true
exactly when its observer contains no pair $(q,s)$ satisfying
$\mathsf{Live}^\M_{a,\psi}(q,s,\mathcal I)$, where $\mathcal I$ is the tuple of
strictly inner observers.

An initial configuration chooses a first model transition
$(q_0,\sigma,q)\in\Delta$ with $q_0\in Q_0$. The observers are initialized on
$\sigma$ from the innermost blocks outward, yielding
$\kappa_{\mathcal O}$. We then choose the first state $t$ of the top automaton
so that
\[
  (t_0,\sigma\cup\kappa_{\mathcal O},t)
  \in\delta_{\theta_{\mathsf{top}}}
\]
for some $t_0\in S_{0,\theta_{\mathsf{top}}}$, with
$t\in C_{\theta_{\mathsf{top}}}$. A successor chooses
$(q,\sigma',q')\in\Delta$, updates all observers on $\sigma'$ to
$\mathcal O'$, computes $\kappa_{\mathcal O'}$, and takes a transition
\[
  (t,\sigma'\cup\kappa_{\mathcal O'},t')
  \in\delta_{\theta_{\mathsf{top}}}.
\]
The global acceptance condition requires both the model component and the top
automaton component to be accepting. Acceptance conditions of inner temporal
automata are checked only inside the relative liveness predicates.

The next lemma relates the three components of a global run.

\begin{lemma}[Global run invariant]
\label{lem:global-invariant}
Let an accepting run of the global product follow
$w=\sigma_1\sigma_2\ldots\in L(\M)$, and write
$(q_i,t_i,\mathcal O_i)$ for its configuration after reading
$\pref{w}{i}$. For every $i\ge1$:
\begin{enumerate}[(i)]
\item $q_i$ is the endpoint of the selected run prefix of $\M$ on
  $\pref{w}{i}$.
\item for every knowledge block $\eta=K_a\psi$, the corresponding component
  of $\mathcal O_i$ is
  \[
    O_{a,\psi}\bigl(\obs_a(\pref{w}{i})\bigr).
  \]
\item for every fresh proposition $p_\eta$ occurring in
  $\theta_{\mathsf{top}}$,
  \[
    p_\eta\in\kappa_{\mathcal O_i}
    \quad\Longleftrightarrow\quad
    w,i\models_\M\eta.
  \]
\item $t_i$ is reachable in $A_{\theta_{\mathsf{top}}}$ after reading the
  enriched prefix
  \[
    (\sigma_1\cup\kappa_{\mathcal O_1})\ldots
    (\sigma_i\cup\kappa_{\mathcal O_i}).
  \]
\end{enumerate}
\end{lemma}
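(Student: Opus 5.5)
We will prove the four items simultaneously by induction on $i$, and for fixed $i$ by an inner induction on the nesting structure of the knowledge blocks, proceeding from the innermost blocks outward exactly as the global product updates its observers. Items (i) and (iv) are bookkeeping: by construction, the initial configuration selects $(q_0,\sigma_1,q_1)\in\Delta$ and a transition of $A_{\theta_{\mathsf{top}}}$ on $\sigma_1\cup\kappa_{\mathcal O_1}$, and each successor step appends one model transition and one top-automaton transition on $\sigma_{i+1}\cup\kappa_{\mathcal O_{i+1}}$. Unwinding these choices gives the run prefix $q_0\cdots q_i$ of $\M$ on $\pref{w}{i}$ and a run of $A_{\theta_{\mathsf{top}}}$ on the enriched prefix ending in $t_i$. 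Since the global run is accepting in its model component, this run prefix extends to an accepting run of $\M$; this is how $w\in L(\M)$ is witnessed.

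For (ii), we argue per block $\eta=K_a\psi$, assuming (ii) and the knowledge-truth statement already established for all blocks strictly inside $\psi$, at the same index $i$. The point is that the observer for $\eta$ is advanced on the letter enriched by the valuation of its immediate inner propositions. By the inner induction hypothesis, this valuation coincides with the true truth values of the inner knowledge formulas at the current position. It depends only on $\obs_a(\pref{w}{i})$, by the invariance of same-agent knowledge under $\sim_a$. Hence it is the same for every run prefix with this observation. The subset-construction recursion (Lemma~\ref{lem:observer-recursion}) then shows that the updated set equals the information set $O_{a,\psi}(\obs_a(\pref{w}{i}))$ of Definition~\ref{def:observer}. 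The base case $i=1$ is the initialization on $\sigma_1$, and the step from $i$ to $i+1$ applies the recursion on $\sigma_{i+1}$.

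For (iii), a fresh proposition $p_\eta$ with $\eta=K_a\psi$ is set by the construction to true iff the $\eta$-component of $\mathcal O_i$ contains no pair $(q,s)$ with $\mathsf{Live}^\M_{a,\psi}(q,s,\mathcal I_i)$, where $\mathcal I_i$ is the tuple of strictly inner observers. By (ii), these sets are the genuine information sets. We then invoke the knowledge test (Lemma~\ref{lem:knowledge-test}) in its relative-liveness form: $w,i\models_\M K_a\psi$ iff $O_{a,\psi}(u)$ has no live pair relative to the inner observers for $u=\obs_a(\pref{w}{i})$. The same argument simultaneously gives truth of the inner blocks, which feed the inner induction.

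The main obstacle is this last step for nested same-agent blocks with inner knowledge below future operators. Here we must check that the inner valuations computed along a hypothetical counterexample continuation in the liveness graph agree with the true truth values on that continuation. The plan is to use the fact that, along the liveness graph, the inner observers are updated on $\obs_a(\tau)$ exactly as the actual observers would be along any word with that extended observation prefix. By an induction along the continuation, using (ii) applied to that word, each future enriched letter then carries correct inner values, so Lemma~\ref{lem:kfree-test-automata} for $\chi_\eta$ applies to the enriched word.
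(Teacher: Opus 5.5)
Your proposal is correct and follows the paper's proof. You use induction on $i$ for items (i), (ii) and (iv) and induction on knowledge-nesting depth for (iii), and (iii) follows from Lemma~\ref{lem:knowledge-test} once (ii) shows that the stored observer sets are the true information sets. Interleaving (ii) and (iii) for inner blocks at the same index makes explicit a dependency the paper leaves implicit, since each observer update consumes the computed inner valuation; note only that your continuation argument needs the observer claim for arbitrary words of $L(\M)$, which is why that depth induction belongs in Lemma~\ref{lem:knowledge-test} rather than in the induction on $i$ along $w$.
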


\begin{proof}
Items~(i), (ii), and (iv) follow by induction on $i$. The model component is
updated by the chosen transition of $\M$, each observer component by the
incremental subset construction, and the top component reads the letter
enriched with the newly computed valuation.

For item~(iii), the induction is on the nesting depth of $\eta$. When $\eta$ has no
inner knowledge, this is Lemma~\ref{lem:knowledge-test} (in the appendix)
with an empty inner tuple. Otherwise, the induction hypothesis gives the correct values of all
strictly inner knowledge propositions, so the relative liveness test feeds the
counterexample automaton exactly the enriched continuation that the semantics
requires. Lemma~\ref{lem:knowledge-test} (appendix) then yields the stated equivalence for
$\eta$.
\end{proof}

By item~(iii), at every position the enriched word read by the top automaton
carries the correct truth values of the knowledge subformulas that were replaced. An
induction on the structure of $\neg\varphi$ therefore shows that the enriched
word satisfies $\theta_{\mathsf{top}}$ at position~$1$ iff
$w,1\models_\M\neg\varphi$. Together with item~(iv) and
Lemma~\ref{lem:kfree-test-automata}, this shows that the global product is
nonempty exactly when some $w\in L(\M)$ satisfies $w,1\models_\M\neg\varphi$.

\begin{theorem}[\EXPSPACE upper bound]
\label{thm:expspace-upper}
Model checking for $\KMTLone$ over
B\"uchi automata, with past and binary-encoded metric intervals, is in \EXPSPACE.
\end{theorem}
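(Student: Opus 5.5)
The plan is to decide $\M\not\models\varphi$ in exponential space and then use closure of \EXPSPACE under complement. By the discussion after Lemma~\ref{lem:global-invariant}, the global product over configurations $(q,t,\mathcal O)$ is nonempty exactly when some $w\in L(\M)$ satisfies $w,1\models_\M\neg\varphi$. It therefore suffices to show two things. First, a single global configuration has exponential size. Second, the initial, transition and acceptance predicates of the product are decidable in exponential space. Nonemptiness of a (generalized) B\"uchi automaton whose states are described by $N$ bits and whose local predicates are decidable in space $\mathrm{poly}(N)$ is decidable in space $\mathrm{poly}(N)$. The standard method is a nondeterministic lasso search: guess a reachable configuration that is accepting in both the model and the top component, then guess a cycle back to it that visits both acceptance sets. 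Savitch's theorem removes the nondeterminism. With $N=2^{\mathrm{poly}}$ this yields \EXPSPACE.

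For the size bound, I invoke Lemma~\ref{lem:kfree-test-automata} on each abstracted body $\chi_\eta=\neg\widehat\psi$ and on $\theta_{\mathsf{top}}$. The alphabets are $2^{\AP'}$ with $\AP'$ being $\AP$ plus polynomially many fresh propositions, and the formulas have size at most $|\varphi|$. So every $S_{\chi_\eta}$ has at most $2^{\mathsf{poly}(|\varphi|)}$ states, each represented in polynomial space. An observer set $O_{a,\psi}(u)\subseteq Q\times S_{\chi_\eta}$ is then a bit vector of length $|Q|\cdot 2^{\mathsf{poly}(|\varphi|)}$. There are at most $|\varphi|$ knowledge blocks, so $\mathcal O$, and hence the whole configuration $(q,t,\mathcal O)$, is described by exponentially many bits.

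Next I check the local predicates. Updating an observer on a letter $\sigma'$ follows the subset recursion of Lemma~\ref{lem:observer-recursion}, processed from innermost to outermost. For each candidate pair $(q',s')$ we ask whether some $(q,s)\in O$ and some model letter $\tau$ with $\obs_a(\tau)=\obs_a(\sigma')$ give $(q,\tau,q')\in\Delta$ and $(s,\tau\cup\kappa,s')\in\delta_{\chi_\eta}$, where $\kappa$ is the inner valuation already computed. This enumerates exponentially many pairs, each with polynomial-space tests, so the update takes exponential space. The acceptance predicates are inherited directly from $\M$ and $A_{\theta_{\mathsf{top}}}$.

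The main obstacle is computing $\kappa_{\mathcal O}$. This requires deciding $\mathsf{Live}^\M_{a,\psi}(q,s,\mathcal I)$, which is itself a generalized B\"uchi nonemptiness question on the relative-liveness graph over triples $(q,s,\mathcal I)$. I would argue by induction on nesting depth. Each such triple again has exponential size, since $\mathcal I$ is a tuple of strictly inner observers. Its transitions are the same observer updates together with inner valuations given by the inner liveness tests, and those tests are in exponential space by the induction hypothesis. Nonemptiness of this graph is again a lasso search in space polynomial in the configuration size, that is, exponential. Because the nesting depth is at most $|\varphi|$, the recursion stacks at most $|\varphi|$ exponential-space procedures, and the total space stays $2^{\mathrm{poly}(|\varphi|)}\cdot\mathrm{poly}(|\M|)$. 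The fragment restriction $\KMTLone$ is what makes this work: inside a block every observer is updated on the same projection $\obs_a(\tau)$. As a result the inner valuations are functions of the current triple, and the semantics of relative liveness shown in Lemma~\ref{lem:knowledge-test} applies.

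Combining these, the outer lasso search together with the nested exponential-space oracles decides nonemptiness of the global product in exponential space. This decides $\M\not\models\varphi$, and hence $\M\models\varphi$, in \EXPSPACE.
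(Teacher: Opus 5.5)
Your proposal is correct and follows essentially the same route as the paper. It uses exponential-size global configurations $(q,t,\mathcal O)$, observer updates and knowledge valuations computed through recursively nested relative-liveness tests of linear depth, a nondeterministic lasso search made deterministic by Savitch's theorem, and closure of \EXPSPACE under complement. One small correction: for this generalized B\"uchi condition, the lasso anchor need not be accepting in both components at once, and an accepting lasso may contain no such configuration, so require only that the guessed cycle visit both $F_\M$ and $F_{\theta_{\mathsf{top}}}$, as the second half of your sentence already does.
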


\begin{proof}
Let $n$ be the size of the model and formula. We construct a pointed automaton
for $\theta_{\mathsf{top}}$ and for the abstracted counterexample body of every
knowledge subformula. There are linearly many such formulas. By
Lemma~\ref{lem:kfree-test-automata}, every automaton has at most
$2^{\mathsf{poly}(n)}$ states, and one state and all local predicates use
polynomial space.

An observer is a subset of $Q\times S_\theta$ and can therefore be represented
by a bit vector of length at most $2^{\mathsf{poly}(n)}$. Since the number of
knowledge subformulas is linear, a global configuration uses exponential
space.

Successors of a global configuration are generated in exponential space. The
procedure chooses the next model transition, updates the observer bit vectors,
and computes the fresh knowledge propositions. To evaluate one such
proposition, it scans the corresponding observer and tests relative liveness
for each candidate pair. A relative-liveness configuration
$(q,s,\mathcal I)$ also has an exponential-space representation. Its successor
procedure updates the inner observers and may call only liveness tests for
strictly smaller knowledge blocks. The recursion depth is at most linear, so
the entire test remains in exponential space.

The implicit global graph consequently has at most
$2^{2^{\mathsf{poly}(n)}}$ configurations and an exponential-space successor
procedure. Nonemptiness is checked by guessing a lasso through the two B\"uchi
acceptance components. Only a constant number of configurations and a
path-length counter of exponentially many bits are stored. This yields a
nondeterministic exponential-space procedure, which lies in deterministic
\EXPSPACE by Savitch's theorem~\cite{savitch1970}. By
Lemma~\ref{lem:global-correct} in the appendix, the global product is nonempty
exactly when some $w\in L(\M)$ satisfies $\neg\varphi$ at position~$1$, so
$\M\models\varphi$ holds iff the product is empty. Since \EXPSPACE is closed
under complement, model checking is in \EXPSPACE.
\end{proof}

\begin{corollary}[Main complexity theorem]
\label{cor:main-complexity}
Model checking for $\KMTLone$ over
B\"uchi automata, with past and binary-encoded metric intervals, is
\EXPSPACE-complete.
\end{corollary}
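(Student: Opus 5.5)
The corollary combines two results already established. I will prove membership and hardness separately and then cite them together.

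\textbf{Upper bound.} Membership in \EXPSPACE is exactly Theorem~\ref{thm:expspace-upper}. That theorem covers the full fragment $\KMTLone$: past operators, binary-encoded metric intervals, arbitrary B\"uchi models, and same-agent nesting. No further argument is needed for this direction.

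\textbf{Lower bound.} Hardness is Theorem~\ref{thm:expspace-hardness}. Its instances, $\M_{\mathcal{T}}^{\mathsf{tag}}$ together with $\neg\PhiI$, are special cases of the general problem. They use one agent, one occurrence of $K_a$, and only the interval $\Nat$, and $\Nat$ is a legal interval $[0,\infty)$ under the binary-encoding convention. So hardness for this restricted class implies hardness for all of $\KMTLone$ model checking. The reduction is polynomial. I would also confirm that it is computable in logarithmic space, or at least in polynomial time, which is the standard requirement for \EXPSPACE-hardness.

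\textbf{Combining.} Membership together with hardness gives \EXPSPACE-completeness. The only step that needs care is matching the problem definitions: both theorems must refer to the same problem as Definition~\ref{def:mc}. For the upper bound this means universal model checking over $L(\M)$ with letter-based observations. For the lower bound, Theorem~\ref{thm:expspace-hardness} already shows hardness via $\M_{\mathcal{T}}^{\mathsf{tag}}\not\models\neg\PhiI$, which is an instance of that same problem.

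I expect no real obstacle. The substantive work lies in the correctness of the global product (Lemma~\ref{lem:global-invariant}, and Lemma~\ref{lem:global-correct} in the appendix) and in the hidden-tag lemma, and both are assumed here.
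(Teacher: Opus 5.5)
Your proposal is correct and matches the paper's proof: the corollary simply combines the lower bound of Theorem~\ref{thm:expspace-hardness} with the upper bound of Theorem~\ref{thm:expspace-upper}. Your extra checks are consistent with how those theorems are stated, namely that both refer to the problem of Definition~\ref{def:mc} and that the hardness reduction goes through the complement, $\M_{\mathcal{T}}^{\mathsf{tag}}\not\models\neg\PhiI$.
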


\begin{proof}
The lower bound is Theorem~\ref{thm:expspace-hardness}. The upper bound is
Theorem~\ref{thm:expspace-upper}.
\end{proof}

Table~\ref{tab:complexity-landscape} situates this result among related logics.
For the knowledge-free metric fragment, satisfiability is
\PSPACE-complete~\cite{Alsmann025}, and the model-checking result follows
using the standard product construction.

\begin{table}[t]
\centering
\renewcommand{\arraystretch}{1.25}
\caption{Model checking for $\KMTLone$ and neighboring fragments. The last row
is the result of this paper.}
\label{tab:complexity-landscape}
\begin{tabular}{@{}p{0.5\textwidth}p{0.18\textwidth}l@{}}
\toprule
\textbf{Logic} & \textbf{Problem} & \textbf{Complexity}\\
\midrule
Epistemic LTL, future, perfect recall, unrestricted alternation~\cite{BozzelliMM24}
  & model checking & nonelementary\\
Epistemic LTL, future, perfect recall, alternation depth one~\cite{BozzelliMM24}
  & model checking & \PSPACE-complete\\
Metric temporal logic, future, strict first-time, no knowledge~\cite{Alsmann025}
  & satisfiability, model checking & \PSPACE-complete\\
LTL with past and a forgettable-past operator~\cite{LMS2002}
  & satisfiability, model checking & \EXPSPACE-complete\\
\midrule
$\KMTLone$: past, metric intervals, perfect recall, alternation depth one
  & model checking & \EXPSPACE-complete\\
\bottomrule
\end{tabular}
\end{table}

\section{Discussion and Conclusion}
\label{sec:conclusion}

In diagnosis and opacity, information sets over system states are insufficient:
they cannot distinguish observation-equivalent histories with different pasts.
We therefore let observers range over pairs of system states and temporal automaton states.

The lower bound uses one agent, one knowledge occurrence, and only the unbounded
interval $\Nat$, so metric constants are not the source of hardness.
Under standard, non-first-time metric operators the knowledge-free fragment is
already \EXPSPACE-complete~\cite{LaroussinieMS06}, and we conjecture the full
problem to be $2$-\EXPSPACE-complete.

Monitorability suggests a separate, apparently harder problem: after every
reachable observation prefix, is there a finite continuation forcing a definitive
verdict? Over dense time it is undecidable for nondeterministic timed-automaton
specifications~\cite{GrosenKL025}, and its discrete-time complexity is open.

\bibliography{lit}

\clearpage

\appendix

\section{Lower-Bound Details}
\label{sec:lower-bound-details}

We give the formulas used in \Cref{sec:lower-bound} and prove the properties
claimed there. Fix a tiling instance
\[
  \mathcal T=(C,T,m,d_{\mathsf{init}},d_{\mathsf{final}}),
\]
where \(T\subseteq C^4\) and \(m\ge 1\). For \(d\in T\), write
\(d_{\mathsf{down}},d_{\mathsf{left}},d_{\mathsf{up}},d_{\mathsf{right}}\)
for its four colors.

\subsection{Visible Grid Encodings}
\label{sec:grid-encodings}

Let
\[
  \Main=\{b_i^+,b_i^-\mid 1\le i\le m\}\cup T\cup\{\#,\End\}
\]
be a set of propositions. The visible proposition \(\End\) marks the first
\(\#\)-position.
A cell is represented by a block of length \(m+1\):
\[
  \{b_1^{\pm}\}\;\{b_2^{\pm}\}\;\ldots\;\{b_m^{\pm}\}\;\{d\}.
\]
At position \(i\) of the block, exactly one of \(b_i^+\) and \(b_i^-\) is
visible, and the last position carries exactly one tile proposition \(d\in T\).
The least significant bit is \(b_1\). Encoding one bit per position keeps the
number of propositions, and hence the automaton constructed below, polynomial
in \(m+|T|\).

The visible part of a finite grid is listed row by row, from left to right and
from the bottom row upwards. It is followed by \(\#^\omega\). At a cell-start
position, put
\[
  \Cell := b_1^+\vee b_1^-.
\]
For \(1\le r\le m\) and \(d\in T\), define
\[
  \Bit_r^+ := X^{r-1}b_r^+,
  \qquad
  \Bit_r^- := X^{r-1}b_r^-,
  \qquad
  \Tile_d := X^m d.
\]
Evaluated at a cell start, \(\Bit_r^\pm\) and \(\Tile_d\) thus refer to the
\(r\)-th counter bit and the tile of that cell. Finally, let
\[
  \Zero := \bigwedge_{r=1}^m \Bit_r^-,
  \qquad
  \Max := \bigwedge_{r=1}^m \Bit_r^+.
\]

\subsection{The Tagged System}
\label{sec:tagged-system}

We now extend the set of propositions with the two hidden tags:
\[
  \AP_{\mathcal T}=\Main\cup\{\Lo,\Up\}.
\]
There is one agent \(a\), with
\[
  \AP_a=\Main.
\]
Hence \(a\) observes the complete grid encoding but not the tags.

\begin{definition}[Tagged automaton]
The tagged model \(\M_{\mathcal T}^{\mathsf{tag}}\) is a B\"uchi automaton over
\(2^{\AP_{\mathcal T}}\) with all states accepting, specified by the regular
language it recognizes: the infinite words satisfying the following conditions:
\begin{enumerate}[(a)]
\item The word either consists of an infinite sequence of complete cell
      blocks, or of a finite sequence of complete cell blocks followed by the
      terminal segment described in (b).
\item Within a cell block, the visible letters have the form specified above.
      At the first position of the terminal loop, the visible letter is
      \(\{\#,\End\}\), and at every later position it is \(\{\#\}\).
\item Neither tag occurs at a \(\#\)-position.
\item Each of \(\Lo\) and \(\Up\) occurs at most once, and only at a cell-start
      position.
\end{enumerate}
\end{definition}

To recognize this language, \(\M_{\mathcal T}^{\mathsf{tag}}\) needs only to
record the position within a cell block, whether the terminal loop has started,
and whether each tag has already occurred, so it has polynomially many states.
Its transition relation is polynomial as well.
Although the alphabet \(2^{\AP_{\mathcal T}}\) is exponential, the encoding of
one bit per position enables at each state only the \(O(|T|)\) letters that
carry a single counter bit or a single tile, together with one of the
constantly many admissible tag placements. Every other letter has no outgoing
transition and is rejected, so \(\delta\) is never enumerated over the full
alphabet. Hence \(\M_{\mathcal T}^{\mathsf{tag}}\) has size polynomial in
\(m+|T|\). Traces containing no \(\#\)-position are allowed, but cannot satisfy
the eventual \(\End\)-requirement in \(\PhiI\).

\begin{lemma}[Tag completeness]
\label{lem:tag-completeness}
Let \(w\in L(\M_{\mathcal T}^{\mathsf{tag}})\), let \(e\) be its first
\(\#\)-position, and let \(c,c'<e\) be cell-start positions. There is a word
\(w'\in L(\M_{\mathcal T}^{\mathsf{tag}})\) with the same visible projection
as \(w\) such that \(\Lo\) occurs exactly at \(c\), \(\Up\) occurs exactly at
\(c'\), and
\[
  (w,e)\sim_a(w',e).
\]
\end{lemma}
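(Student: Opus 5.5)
The proof is a direct construction: keep every visible letter of \(w\), and rewrite only the two hidden tags. Write \(w=\sigma_1\sigma_2\ldots\). Define \(w'=\sigma'_1\sigma'_2\ldots\) by
\[
  \sigma'_k=(\sigma_k\cap\Main)\cup\{\Lo\mid k=c\}\cup\{\Up\mid k=c'\}.
\]
By construction, \(\sigma'_k\cap\Main=\sigma_k\cap\Main\) for every \(k\). So \(w'\) has the same visible projection as \(w\), and \(\Lo\) and \(\Up\) occur exactly at \(c\) and \(c'\).

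The main step is membership \(w'\in L(\M_{\mathcal T}^{\mathsf{tag}})\), checked against conditions (a)--(d) of the definition of the tagged automaton.

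\begin{enumerate}[(a)]
\item This condition depends only on the visible letters, which are unchanged from \(w\).
\item This condition also depends only on the visible letters, so it carries over from \(w\).
\item Since \(c,c'<e\) and \(e\) is the first \(\#\)-position, neither tag sits on a \(\#\)-position. Every position before \(e\) lies inside a complete cell block by (a), so none of them carries \(\#\).
\item Each tag occurs exactly once, hence at most once. Positions \(c\) and \(c'\) are cell-start positions, and that notion is itself determined by the visible block structure, so it is the same in \(w\) and \(w'\).
\end{enumerate}

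Nothing in the definition forbids \(c=c'\) or \(c'<c\), so these cases need no special treatment. All states of \(\M_{\mathcal T}^{\mathsf{tag}}\) are accepting and the automaton recognizes exactly this regular language, so \(w'\in L(\M_{\mathcal T}^{\mathsf{tag}})\).

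For indistinguishability, \(\AP_a=\Main\) gives \(\obs_a(\sigma'_k)=\sigma'_k\cap\Main=\sigma_k\cap\Main=\obs_a(\sigma_k)\) for every \(k\). In particular \(\obs_a(\pref{w'}{e})=\obs_a(\pref{w}{e})\), and the positions agree, so \((w,e)\sim_a(w',e)\).

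The only point needing care is confirming that the tag constraints in (c) and (d) refer solely to visible structure and tag counts. This holds because the automaton specification is purely conditional on visible letters plus at-most-once tag occurrences, so no interaction with the old tag positions in \(w\) remains.
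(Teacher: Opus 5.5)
Your proposal is correct and matches the paper's proof: keep the \(\Main\)-projection of \(w\), place \(\Lo\) at \(c\) and \(\Up\) at \(c'\), check conditions (a)--(d) for membership, and use \(\AP_a=\Main\) to conclude that the observation prefixes up to \(e\) coincide. You check each condition separately, which the paper does in one sentence; the argument is otherwise the same.
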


\begin{proof}
Keep the projection to \(\Main\) unchanged and place \(\Lo\) at \(c\) and \(\Up\)
at \(c'\). Since the language consists of all words satisfying conditions
(a)--(d), and this placement respects (c) and (d), the resulting \(w'\) is
accepted. As neither tag is visible to \(a\), the observation prefixes of \(w\)
and \(w'\) up to position \(e\) coincide.
\end{proof}

\subsection{The Base Formula}
\label{sec:base-formula}

The base formula checks the visible constraints that do not compare adjacent
rows. For \(1\le r\le m\), let
\[
  \Carry_r := \bigwedge_{s<r}\Bit_s^+,
\]
where \(\Carry_1=\top\), and put
\[
  \NextCell(\varphi):=X^{m+1}\varphi.
\]
Define
\[
\Inc :=
\bigwedge_{r=1}^m
\left(
\bigl(\Carry_r\rightarrow
      (\NextCell(\Bit_r^+)\leftrightarrow \Bit_r^-)\bigr)
\wedge
\bigl(\neg\Carry_r\rightarrow
      (\NextCell(\Bit_r^+)\leftrightarrow \Bit_r^+)\bigr)
\right).
\]
At a cell start, \(\Inc\) states that the counter in the next cell is the
current counter plus one modulo \(2^m\).

Horizontal compatibility is expressed by
\[
\mathsf{Horiz}:=
\Box\left(
\Cell\wedge\neg\Max
\rightarrow
\bigwedge_{d\in T}
\left(
\Tile_d\rightarrow
\bigvee_{\substack{d'\in T\\ d_{\mathsf{right}}=d'_{\mathsf{left}}}}
\NextCell(\Tile_{d'})
\right)
\right).
\]
Now let
\[
\Base :=
(\Cell\wedge\Zero\wedge\Tile_{d_{\mathsf{init}}})
\wedge
\Diamond(\Cell\wedge\Max\wedge\Tile_{d_{\mathsf{final}}}
                 \wedge\NextCell(\End))
\]
\[
\wedge\;
\Box(\Cell\wedge\NextCell(\neg\#)\rightarrow\Inc)
\wedge
\mathsf{Horiz}.
\]

\begin{lemma}[Meaning of the base formula]
\label{lem:base-meaning}
Let \(w\in L(\M_{\mathcal T}^{\mathsf{tag}})\). Then
\(w,1\models_{\M_{\mathcal T}^{\mathsf{tag}}}\Base\) iff \(w\) has a first
\(\#\)-position and the visible prefix before it is a nonempty sequence of complete rows of
width \(2^m\), listed in row-major order, such that:
\begin{enumerate}[(i)]
\item its first cell has counter \(0\) and tile \(d_{\mathsf{init}}\),
\item its last cell has counter \(2^m-1\) and tile \(d_{\mathsf{final}}\), and
\item horizontally adjacent tiles have matching colors.
\end{enumerate}
\end{lemma}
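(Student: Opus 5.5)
We prove both directions separately, working throughout with the structure forced by membership in \(L(\M_{\mathcal T}^{\mathsf{tag}})\). Conditions (a) and (b) guarantee that \(w\) is a sequence of complete cell blocks of length \(m+1\), possibly followed by \(\{\#,\End\}\{\#\}^\omega\). Hence the cell-start positions are exactly \(1, m+2, 2m+3,\ldots\) up to the first \(\#\)-position \(e\) (if any). At every cell start \(c\), \(\Cell\) holds, \(\Bit_r^\pm\) reads the \(r\)-th bit, \(\Tile_d\) reads the tile, and \(\NextCell\) jumps to the next cell start or to \(e\). At every non-cell-start position \(\Cell\) fails, because \(b_1^\pm\) occurs only at block offset \(0\) and not at \(\#\)-positions. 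This preliminary observation lets us read each conjunct of \(\Base\) as a statement about the sequence of cells \(c_0,c_1,\ldots\).

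\emph{(\(\Rightarrow\)).} The second conjunct of \(\Base\) gives a position \(c\) with \(\Cell\wedge\Max\wedge\Tile_{d_{\mathsf{final}}}\) and \(\NextCell(\End)\). Since \(\End\) occurs only at \(e\), the word has a first \(\#\)-position, and \(c\) is the last cell before \(e\). So the prefix is a finite nonempty sequence of cells \(c_0,\ldots,c_N\).

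The first conjunct gives counter \(0\) and tile \(d_{\mathsf{init}}\) at \(c_0\). For every \(k<N\), \(\NextCell(\neg\#)\) holds at \(c_k\), so \(\Inc\) holds there. We then check that \(\Inc\) is exactly binary increment modulo \(2^m\) with LSB \(b_1\): bit \(r\) flips iff all lower bits are \(1\). By induction on \(k\), the counter at \(c_k\) is \(k \bmod 2^m\).

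Since \(c_N\) carries \(\Max\), we get \(N\equiv 2^m-1 \pmod{2^m}\). Therefore \(N+1\) is a positive multiple of \(2^m\), and the cells split into complete rows, with a row boundary exactly where the counter wraps from \(\Max\) to \(\Zero\). This gives (i) and (ii).

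For (iii), \(\mathsf{Horiz}\) applies at every \(c_k\) not carrying \(\Max\), i.e. exactly at cells that are not the last in their row. For such a cell, \(c_{k+1}\) is its right neighbour and carries a tile \(d'\) with \(d_{\mathsf{right}}=d'_{\mathsf{left}}\), because each cell has exactly one tile. The \(\Box\) also ranges over non-cell positions, but there the antecedent \(\Cell\) is false, so it imposes nothing.

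\emph{(\(\Leftarrow\)).} We reverse each step. Counter \(0\) at \(c_0\) together with complete rows of width \(2^m\) forces the counter at \(c_k\) to be \(k\bmod 2^m\), so \(\Inc\) holds at each \(c_k\) with \(k<N\). At \(c_N\), \(\NextCell(\neg\#)\) is false, so the implication holds vacuously there. The last cell witnesses the \(\Diamond\)-conjunct, since \(\NextCell\) of it is \(e\), where \(\End\) holds. Horizontal matching gives \(\mathsf{Horiz}\), and non-cell positions satisfy \(\mathsf{Horiz}\) trivially.

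One subtlety concerns \(\mathsf{Horiz}\) at the last cell of a row other than the top row. That cell carries \(\Max\), so it is correctly exempted from the horizontal check even though a next cell exists.

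The main obstacle is the counter argument: verifying that the \(\Carry_r\)-based \(\Inc\) formula encodes increment modulo \(2^m\), and tracking exactly which cells the \(\Box\)-guards \(\NextCell(\neg\#)\) and \(\neg\Max\) exempt. I would handle it by a clean induction on \(k\), treating the wraparound at \(\Max\) explicitly as the case where \(\Carry_r\) holds for all \(r\).
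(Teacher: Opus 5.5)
Your proposal is correct and follows the same conjunct-by-conjunct argument as the paper: the first conjunct fixes the initial cell, the guarded \(\Inc\) gives counter \(k \bmod 2^m\) at \(c_k\), the \(\End\)-conjunct pins down the last cell before the first \(\#\), and \(\mathsf{Horiz}\) with its \(\neg\Max\) guard checks exactly the within-row edges. You add detail the paper omits (cell starts are the only \(\Cell\)-positions, and the increment induction), and your one imprecision is shared with the paper: in the converse, ``counter at \(c_k\) is \(k\bmod 2^m\)'' is not forced by counter \(0\) at \(c_0\) and the row count, but is part of what ``rows listed in row-major order'' must mean for this encoding.
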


\begin{proof}
The first conjunct of \(\Base\) fixes the first counter and tile. Before the
terminal loop, the increment conjunct forces each successive cell counter to
increase by one modulo \(2^m\). The conjunct containing \(\End\) requires the
terminal loop to begin immediately after a cell with counter \(2^m-1\) and tile
\(d_{\mathsf{final}}\). The number of cells before the loop is therefore a
positive multiple of \(2^m\), and these cells form complete rows. Finally,
\(\mathsf{Horiz}\) checks precisely the edges between consecutive columns, but it
does not compare the last cell of one row with the first cell of the next.
The converse follows by evaluating the four conjuncts on any prefix with the
stated properties.
\end{proof}

\subsection{The Hidden-Tag Column Check}
\label{sec:column-check}

The vertical test is evaluated at the first \(\#\)-position. At a cell start,
the following formula compares the current counter with that of the
\(\Lo\)-marked cell:
\[
\SameL :=
\Cell\wedge \Diamond^{-}(\Lo\wedge\Cell)
\wedge
\bigwedge_{r=1}^m
\left(
\Bit_r^+
\leftrightarrow
\Diamond^{-}(\Lo\wedge\Cell\wedge\Bit_r^+)
\right).
\]
Define
\[
\Pair :=
\Diamond^{-}
\left(
\Lo\wedge\Cell
\wedge
\NextCell
\left(
(\neg \#\wedge\neg\SameL)
\,U^1_{\Nat}\,
(\Up\wedge\Cell\wedge\SameL)
\right)
\right).
\]
In the left-hand argument, \(\neg\SameL\) fails at any earlier cell that shares
the counter of the \(\Lo\)-cell. Consequently, \(\Pair\) holds exactly when
\(\Up\) marks the first later cell whose counter agrees with that of the
\(\Lo\)-cell.

Vertical color compatibility is expressed by
\[
\Good :=
\Diamond^{-}
\left(
\Up\wedge\Cell
\wedge
\bigvee_{\substack{d,d'\in T\\ d_{\mathsf{up}}=d'_{\mathsf{down}}}}
\left(
\Tile_{d'}
\wedge
\Diamond^{-}(\Lo\wedge\Cell\wedge\Tile_d)
\right)
\right).
\]
Finally, set
\[
  \Col := K_a(\Pair\rightarrow\Good),
  \qquad
  \PhiI := \Base\wedge\Diamond(\End\wedge\Col).
\]

\begin{lemma}[Same-column predicate]
\label{lem:same-column}
Suppose \(w\in L(\M_{\mathcal T}^{\mathsf{tag}})\) has a unique
\(\Lo\)-marked cell at position \(c\). For every cell-start position \(x\ge c\),
\[
  w,x\models_{\M_{\mathcal T}^{\mathsf{tag}}}\SameL
\]
iff the cells at \(c\) and \(x\) carry the same counter.
\end{lemma}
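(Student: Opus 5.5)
We unfold the definition of \(\SameL\) at a cell start \(x\ge c\) and evaluate each conjunct separately. The first conjunct \(\Cell\) holds by assumption. For the second, \(\Diamond^{-}(\Lo\wedge\Cell)\) holds at \(x\) because \(c\le x\) is a \(\Lo\)-marked cell start and the non-strict past eventuality \(\top\,S^1_{\Nat}\,(\Lo\wedge\Cell)\) admits \(j=c\). Under strict first-time semantics the witness is the most recent such position, but it always exists here. So the lemma reduces to showing that, for each \(r\), the biconditional \(\Bit_r^+\leftrightarrow\Diamond^{-}(\Lo\wedge\Cell\wedge\Bit_r^+)\) at \(x\) holds iff bit \(r\) of the counters at \(c\) and \(x\) agree.

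The key step is to identify the truth value of \(\Diamond^{-}(\Lo\wedge\Cell\wedge\Bit_r^+)\) at \(x\). By condition (d) of the tagged automaton, \(\Lo\) occurs at most once in \(w\), and by hypothesis it occurs at \(c\le x\). Hence the positions \(j\le x\) satisfying \(\Lo\wedge\Cell\wedge\Bit_r^+\) are either exactly \(\{c\}\) (when \(w,c\models\Bit_r^+\)) or none. Since \(\Diamond^{-}\) is \(\top\,S^1_{\Nat}\), the condition on intermediate positions is vacuous apart from \(\neg\psi\), and the first-time requirement is automatically met by the unique candidate. Thus \(w,x\models\Diamond^{-}(\Lo\wedge\Cell\wedge\Bit_r^+)\) iff \(w,c\models\Bit_r^+\).

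It remains to check that \(\Bit_r^+=X^{r-1}b_r^+\), evaluated at a cell start \(y\), holds iff the \(r\)-th bit of the cell at \(y\) is \(1\). This follows from the block format (condition (b)): position \(y+r-1\) lies inside the same block and carries exactly one of \(b_r^+,b_r^-\). Both \(c\) and \(x\) are cell starts before any \(\#\)-position. Here \(c\) is before \(\#\) by condition (c), and \(x\) is a cell start, so it too lies within the block region by (a). Combining these facts, the \(r\)-th conjunct holds at \(x\) iff bit \(r\) at \(x\) equals bit \(r\) at \(c\). Taking the conjunction over \(r=1,\dots,m\) yields equality of the counters.

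The main point needing care is the interaction of strict first-time semantics with the \(\Diamond^{-}\) inside the biconditional. One might worry that the most recent \((\Lo\wedge\Cell)\)-position differs from that of the bit-refined formula. Uniqueness of \(\Lo\) removes this concern entirely, so I expect no real obstacle beyond stating that uniqueness explicitly.
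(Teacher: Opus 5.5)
Your proof is correct and follows the paper's argument. Uniqueness of the \(\Lo\)-cell makes \(\Diamond^{-}(\Lo\wedge\Cell\wedge\Bit_r^+)\) at \(x\) equivalent to bit \(r\) of that cell being \(1\), so each biconditional in \(\SameL\) states agreement on bit \(r\). You also make explicit two points the paper leaves implicit: the conjunct \(\Diamond^{-}(\Lo\wedge\Cell)\) holds because \(c\le x\), and strict first-time semantics is harmless for \(\Diamond^{-}\) over the interval \(\Nat\).
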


\begin{proof}
For each \(r\), the formula
\(\Diamond^{-}(\Lo\wedge\Cell\wedge\Bit_r^+)\) holds at \(x\) iff bit \(r\)
of the unique \(\Lo\)-cell is \(1\). The \(r\)-th equivalence in \(\SameL\)
therefore states that the two cells agree on bit \(r\). The initial
\(\Cell\)-conjunct ensures that the formula is true only at cell starts.
\end{proof}

\begin{lemma}[Pair selects vertical neighbors]
\label{lem:pair-selects}
Assume \(w,1\models\Base\), let \(e\) be the first \(\#\)-position, and suppose
that \(\Lo\) and \(\Up\) occur uniquely at cell starts \(c\) and \(c'\),
respectively. Then
\[
  w,e\models\Pair
\]
iff \(c'\) is the first cell start after \(c\) whose counter equals the counter
at \(c\). Whenever such a cell exists, it is the cell directly above \(c\) in
the next row.
\end{lemma}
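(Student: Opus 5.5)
We unfold the semantics of \(\Pair\) at \(e\) and reduce it to \Cref{lem:same-column}. Since \(\Lo\) occurs only at \(c\) and \(c\) is a cell start before \(e\), the outer \(\Diamond^{-}\) in \(\Pair\) can be witnessed only at position \(c\). Hence \(w,e\models\Pair\) iff
\[
  w,c+m+1\models(\neg\#\wedge\neg\SameL)\,U^1_{\Nat}\,(\Up\wedge\Cell\wedge\SameL).
\]
The first-time until then asks for a least position \(j\ge c+m+1\) at which \(\Up\wedge\Cell\wedge\SameL\) holds. Every position strictly between \(c+m+1\) and \(j\) must satisfy \(\neg\#\) and \(\neg\SameL\).

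For the forward direction, suppose the until holds with witness \(j\). Since \(\Up\) occurs only at \(c'\), we get \(j=c'\), and \(c'>c\). By \Cref{lem:same-column}, \(\SameL\) at \(c'\) means that \(c'\) has the same counter as \(c\). For every cell start \(x\) with \(c<x<c'\), we have \(x\ge c+m+1\) because blocks have length \(m+1\). So \(\neg\SameL\) holds at \(x\), and by \Cref{lem:same-column} again the counter at \(x\) differs. Hence \(c'\) is the first later cell start with the same counter. At non-cell-start positions, \(\SameL\) is false automatically because of its \(\Cell\) conjunct. This direction does not need the \(\neg\#\) conjunct.

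For the converse, let \(c'\) be the first later cell start with the counter of \(c\), and assume \(c'<e\), as in the hypothesis. Every position in \([c+m+1,c')\) lies before \(e\), so \(\neg\#\) holds there. Cell starts in that range fail \(\SameL\) by minimality and \Cref{lem:same-column}, and other positions fail \(\Cell\). At \(c'\), all three conjuncts \(\Up\), \(\Cell\) and \(\SameL\) hold. The one subtle point is that \(\Up\wedge\Cell\wedge\SameL\) must fail before \(c'\) for the witness to be the first one. This holds because \(\Up\) occurs only at \(c'\). The witness distance lies in \(\Nat\).

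For the last claim we use \Cref{lem:base-meaning}: before \(e\) the counters run cyclically \(0,\dots,2^m-1\) in complete rows. If \(c\) has counter \(k\) in row \(j\), the next cell with counter \(k\) is column \(k\) of row \(j+1\), provided that row exists. This cell lies exactly \(2^m\) cells after \(c\), and no cell strictly in between has counter \(k\), since counters increase by one modulo \(2^m\). If \(c\) is in the last row, no such cell exists before \(e\), and positions from \(e\) on carry \(\#\) and no cell start. I expect the main care point to be the exact block-offset bookkeeping, namely that \(\NextCell\) from \(c\) lands on the next cell start. The rest is direct unfolding.
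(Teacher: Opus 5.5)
Your proof is correct and follows essentially the same route as the paper. Uniqueness of \(\Lo\) pins the outer \(\Diamond^{-}\) to \(c\), \(\NextCell\) lands on the next cell start, \Cref{lem:same-column} turns \(\SameL\) into counter equality, uniqueness of \(\Up\) together with the \(\neg\#\) guard fixes the until witness, and \(\Base\) makes the first counter recurrence the cell directly above. One wording slip: strict first-time until needs the left argument on the half-open range \([c+m+1,j)\), including \(c+m+1\) itself and not only positions strictly between, which is in fact what you use in both directions.
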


\begin{proof}
Since \(\Lo\) occurs only at \(c\), the outer \(\Diamond^{-}\) can refer only to
\(c\), and \(\NextCell\) moves to the start of the following cell. By
Lemma~\ref{lem:same-column}, \(\SameL\) holds at a later cell start exactly when
its counter equals that of the \(\Lo\)-cell. Before the first such cell, the
left-hand argument \(\neg\#\wedge\neg\SameL\) holds, and there it fails, so the
strict first-time until can hold only if this cell also carries \(\Up\). The
guard \(\neg\#\) excludes a witness in the terminal loop. Under \(\Base\), equal
counters recur once per row, so the first recurrence lies in the next row.
\end{proof}

\begin{lemma}[Good checks vertical compatibility]
\label{lem:good-checks}
Assume that \(w\) has unique tags and that \(w,e\models\Pair\) at its first
\(\#\)-position. Then
\[
  w,e\models\Good
\]
iff the tile at the \(\Lo\)-cell is vertically compatible with the tile at the
\(\Up\)-cell.
\end{lemma}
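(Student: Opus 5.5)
The argument unfolds the semantics of $\Good$ at $e$, using only the two facts assumed in the statement: each tag occurs exactly once, and $w,e\models\Pair$. Let $c$ and $c'$ be the unique $\Lo$- and $\Up$-positions. Both are cell starts below $e$, by condition (d) and condition (c) of the tagged automaton. By Lemma~\ref{lem:pair-selects}, $w,e\models\Pair$ forces $c<c'$, with $c'$ the first later cell sharing the counter of $c$.

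\emph{Step 1: the outer $\Diamond^-$.} Since $\Up$ occurs only at $c'\le e$, the outer $\Diamond^{-}(\Up\wedge\Cell\wedge\ldots)$ holds at $e$ iff its body holds at $c'$. The body is then the disjunction over pairs $(d,d')$ with $d_{\mathsf{up}}=d'_{\mathsf{down}}$ of $\Tile_{d'}\wedge\Diamond^{-}(\Lo\wedge\Cell\wedge\Tile_d)$, evaluated at $c'$.

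\emph{Step 2: the inner $\Diamond^-$.} Because $c<c'$ and $\Lo$ occurs only at $c$, the formula $\Diamond^{-}(\Lo\wedge\Cell\wedge\Tile_d)$ holds at $c'$ iff $w,c\models\Tile_d$. By the block format, this means the tile proposition at position $c+m$ is $d$.

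\emph{Step 3: identify the tiles.} Likewise, $w,c'\models\Tile_{d'}$ iff the tile of the cell at $c'$ is $d'$. Each block carries exactly one tile proposition, by condition (b) of the automaton. Hence the disjunction is witnessed iff the unique pair (tile at $c$, tile at $c'$) satisfies $d_{\mathsf{up}}=d'_{\mathsf{down}}$. This is vertical compatibility of the $\Lo$-cell below the $\Up$-cell; the direction is right because $c<c'$.

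The only delicate point is ensuring the past operators cannot pick a witness other than the intended cell. Uniqueness of the tags handles this. The ordering $c<c'$ obtained from $\Pair$ guarantees that the $\Lo$-cell lies in the past of $c'$, so the nested $\Diamond^-$ is actually reachable. Strict first-time semantics does not interfere here, because every $\Diamond^-$ uses the interval $\Nat$ with left argument $\top$, so it reduces to ordinary ``sometime in the past''.
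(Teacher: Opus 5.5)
Your proof is correct and is essentially the paper's argument: unfold $\Good$ at $e$ and use uniqueness of the tags to pin both $\Diamond^{-}$ witnesses, only carried out in more detail, including the ordering $c<c'$ that the paper's one-line proof leaves implicit. One small nit is that Lemma~\ref{lem:pair-selects} assumes $\Base$, which is not a hypothesis here, but you only need $c<c'$, and that follows directly from $\Pair$: its until is evaluated at $c+m+1$ and its witness must carry $\Up$, so $c'\ge c+m+1$.
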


\begin{proof}
Since the tags are unique, \(w,e\models\Good\) iff the \(\Up\)-cell carries a
tile \(d'\) and the \(\Lo\)-cell carries a tile \(d\) with
\(d_{\mathsf{up}}=d'_{\mathsf{down}}\). This is exactly vertical compatibility
of the two tiles.
\end{proof}

\begin{proof}[Proof of Lemma~\ref{lem:knowledge-column}]
Suppose that a vertical edge is incompatible, with lower endpoint \(c\) and
upper endpoint \(c'\). By Lemma~\ref{lem:tag-completeness}, there is an
\(a\)-indistinguishable word \(w'\) with \(\Lo\) at \(c\) and \(\Up\) at
\(c'\). Lemma~\ref{lem:pair-selects} gives \(w',e\models\Pair\), whereas
Lemma~\ref{lem:good-checks} gives \(w',e\not\models\Good\). Hence
\(w,e\not\models K_a(\Pair\rightarrow\Good)\).

Conversely, assume that all vertical edges are compatible and let \(w'\) satisfy
\((w,e)\sim_a(w',e)\). Since every grid proposition and \(\#\) is visible,
\(w'\) has the same visible prefix as \(w\) up to \(e\), and only the tag placement
may differ. If \(w',e\not\models\Pair\), the implication is true. If
\(w',e\models\Pair\), Lemma~\ref{lem:pair-selects} identifies a vertical edge
of the common grid, and Lemma~\ref{lem:good-checks} yields
\(w',e\models\Good\). Thus every indistinguishable word satisfies the
implication.
\end{proof}

\section{Upper-Bound Details}
\label{sec:upper-bound-details}

We now give the temporal test automata, the observer update, and the global
product used in the upper bound.

\subsection{Knowledge-Free Temporal Automata}

The construction below follows the future-operator temporal automata
of~\cite{Alsmann025}, extended with forward-propagated counters for past
operators.

\begin{definition}[Normal form]
\label{def:upper-normal-form}
A knowledge-free formula is in \emph{normal form} if every interval is of the
form \([0,h]\) or \([\ell,\infty)\).
\end{definition}

Strict first-time semantics gives the equivalences
\[
  \varphi\,U^1_{[\ell,h]}\,\psi
  \;\equiv\;
  \bigl(\varphi\,U^1_{[0,h]}\,\psi\bigr)
  \wedge
  \bigl(\varphi\,U^1_{[\ell,\infty)}\,\psi\bigr),
\]
and, symmetrically,
\[
  \varphi\,S^1_{[\ell,h]}\,\psi
  \;\equiv\;
  \bigl(\varphi\,S^1_{[0,h]}\,\psi\bigr)
  \wedge
  \bigl(\varphi\,S^1_{[\ell,\infty)}\,\psi\bigr).
\]
Indeed, both conjuncts refer to the same first future, respectively most recent
past, occurrence of \(\psi\). We represent the normalized formula as a directed
acyclic graph and share the two arguments of each split. Normalization then
adds only a constant number of nodes per metric subformula. In the notation
below,
\[
  \varphi\,U^1_{\le h}\,\psi
  \quad\text{and}\quad
  \varphi\,U^1_{\ge\ell}\,\psi
\]
stand for the two simple forms, and similarly for \(S^1\). Negations are not
pushed to the atoms, and truth values will be defined compositionally on the
closure.

\begin{definition}[Metric schemes and counters]
\label{def:upper-schemes}
Fix a normal-form formula \(\theta\). Let \(B_\theta\) be the largest finite
metric constant in \(\theta\), or \(0\) if no such constant occurs. With binary
encoding,
\[
  B_\theta<2^{|\theta|}.
\]

A \emph{metric scheme} is obtained from a metric subformula by replacing its
finite bound with a placeholder:
\[
  \varphi\,U^1_{\le *}\,\psi,\qquad
  \varphi\,U^1_{\ge *}\,\psi,\qquad
  \varphi\,S^1_{\le *}\,\psi,\qquad
  \varphi\,S^1_{\ge *}\,\psi.
\]
The first and third forms are \emph{upper-bound schemes}, and the other two are
\emph{lower-bound schemes}. An unbounded subformula with interval
\([0,\infty)\) is represented by a lower-bound scheme with bound \(0\).
\end{definition}

All concrete instances of one scheme have the same witness distance. A single
counter for the scheme therefore determines the truth of every bound occurring
in \(\theta\).

\begin{definition}[Closure]
\label{def:upper-closure}
The closure \(\mathsf{FL}(\theta)\) contains the nodes of the normalized formula
DAG, with each metric node represented by its scheme. Atomic propositions,
Boolean formulas, next formulas, and yesterday formulas are called
\emph{ordinary entries}.
\end{definition}

The number of ordinary entries and schemes is polynomial in \(|\theta|\).

\begin{definition}[Explicit-counter temporal types]
\label{def:upper-types}
A \emph{temporal type} for \(\theta\) is a pair \(H=(v,c)\). The map \(v\)
assigns a value in \(\{0,1\}\) to every atomic proposition and every formula of
the form \(X\varphi\) or \(Y\varphi\) in \(\mathsf{FL}(\theta)\). The map \(c\)
assigns to every metric scheme \(\gamma\) a value
\[
  c(\gamma)\in\{0,\ldots,B_\theta\}\cup\{\bot\}.
\]
The value \(\bot\) means that no witness distance is represented. For an
upper-bound scheme, this includes the case where a witness exists farther than
\(B_\theta\).

For closure formulas and concrete instances of closure schemes, define
\(H\Vdash\delta\) recursively:
\[
\begin{array}{lcl}
H\Vdash\top && \text{always},\\[1mm]
H\Vdash\delta &\Longleftrightarrow& v(\delta)=1,
  \quad\text{if \(\delta\) is atomic, a next formula, or a yesterday formula},\\[1mm]
H\Vdash\varphi\,U^1_{\le h}\,\psi
 &\Longleftrightarrow&
 c(\gamma)\ne\bot\text{ and }c(\gamma)\le h,\\[1mm]
H\Vdash\varphi\,U^1_{\ge\ell}\,\psi
 &\Longleftrightarrow&
 c(\gamma)\ne\bot\text{ and }c(\gamma)\ge\ell,\\[1mm]
H\Vdash\varphi\,S^1_{\le h}\,\psi
 &\Longleftrightarrow&
 c(\gamma)\ne\bot\text{ and }c(\gamma)\le h,\\[1mm]
H\Vdash\varphi\,S^1_{\ge\ell}\,\psi
 &\Longleftrightarrow&
 c(\gamma)\ne\bot\text{ and }c(\gamma)\ge\ell,\\[1mm]
H\Vdash\neg\delta
 &\Longleftrightarrow&
 H\nVdash\delta,\\[1mm]
H\Vdash\delta_1\wedge\delta_2
 &\Longleftrightarrow&
 H\Vdash\delta_1\text{ and }H\Vdash\delta_2,
\end{array}
\]
where \(\gamma\) is the scheme associated with the metric formula. Let
\(\mathsf H_\theta\) be the set of all temporal types.
\end{definition}

A type is represented by polynomially many bits for \(v\) and \(O(|\theta|)\)
bits per counter. Hence
\[
  |\mathsf H_\theta|\le 2^{\mathsf{poly}(|\theta|)}.
\]

We next specify the intended counter values. Fix a word \(w\) and arguments
\(\varphi,\psi\). The \emph{future witness distance} at position \(i\) is the
unique \(d\in\Nat\), if it exists, such that
\[
  w,i+d\models\psi
  \quad\text{and}\quad
  w,k\models\varphi\wedge\neg\psi
  \text{ for all }i\le k<i+d.
\]
The \emph{past witness distance} is defined by
\[
  w,i-d\models\psi
  \quad\text{and}\quad
  w,k\models\varphi\wedge\neg\psi
  \text{ for all }i-d<k\le i.
\]
For an upper-bound scheme, its \emph{truthful counter} is \(d\) when
\(d\le B_\theta\), and \(\bot\) otherwise. For a lower-bound scheme it is
\(\min(d,B_\theta)\). In either case it is \(\bot\) when no witness exists.
These capped values preserve every comparison with a bound occurring in
\(\theta\).

\begin{definition}[Initial and successor types]
\label{def:upper-type-step}
For a scheme \(\gamma\) and
\(z\in\{0,\ldots,B_\theta\}\cup\{\bot\}\), define
\(z\oplus_\gamma 1\) as follows:
\[
  \bot\oplus_\gamma1=\bot.
\]
If \(\gamma\) is an upper-bound scheme, then
\[
  z\oplus_\gamma1=
  \begin{cases}
    z+1 & \text{if }z+1\le B_\theta,\\
    \bot & \text{otherwise}.
  \end{cases}
\]
If \(\gamma\) is a lower-bound scheme, then
\[
  z\oplus_\gamma1=\min(z+1,B_\theta).
\]

A type \(H=(v,c)\) is \emph{prefix-initial}, written
\(H\in\mathsf{Init}_\theta\), if
\[
  H\nVdash Y\varphi
\]
for every yesterday formula in the closure and, for every past scheme
\(\gamma=\varphi\,S^1_{\bowtie *}\,\psi\),
\[
  c(\gamma)=
  \begin{cases}
    0 & \text{if }H\Vdash\psi,\\
    \bot & \text{if }H\nVdash\psi.
  \end{cases}
\]

For types \(H=(v,c)\), \(H'=(v',c')\), and a letter \(\sigma\), write
\(H\xrightarrow{\sigma}_\theta H'\) if all the following conditions hold.
\begin{enumerate}[(i)]
\item For every atomic proposition \(p\) in the closure,
      \[
        v'(p)=1\quad\Longleftrightarrow\quad p\in\sigma.
      \]
\item For every next formula \(X\varphi\),
      \[
        H\Vdash X\varphi\quad\Longleftrightarrow\quad H'\Vdash\varphi.
      \]
\item For every yesterday formula \(Y\varphi\),
      \[
        H'\Vdash Y\varphi\quad\Longleftrightarrow\quad H\Vdash\varphi.
      \]
\item For every future scheme
      \(\gamma=\varphi\,U^1_{\bowtie *}\,\psi\),
      \[
        c(\gamma)=
        \begin{cases}
          0 & \text{if }H\Vdash\psi,\\
          c'(\gamma)\oplus_\gamma1
            & \text{if }H\nVdash\psi\text{ and }H\Vdash\varphi,\\
          \bot
            & \text{if }H\nVdash\psi\text{ and }H\nVdash\varphi.
        \end{cases}
      \]
\item For every past scheme
      \(\gamma=\varphi\,S^1_{\bowtie *}\,\psi\),
      \[
        c'(\gamma)=
        \begin{cases}
          0 & \text{if }H'\Vdash\psi,\\
          c(\gamma)\oplus_\gamma1
            & \text{if }H'\nVdash\psi\text{ and }H'\Vdash\varphi,\\
          \bot
            & \text{if }H'\nVdash\psi\text{ and }H'\nVdash\varphi.
        \end{cases}
      \]
\end{enumerate}
\end{definition}

Here \(H\) describes the current position and \(H'\) the next one, and the label
\(\sigma\) is the letter at the next position. Future counters are guessed
backward from a witness, while past counters are computed forward from the
initial position.

\begin{definition}[Temporal B\"uchi automaton]
\label{def:upper-test-automaton}
Let
\[
  C^0_\theta=\{H\in\mathsf H_\theta\mid H\Vdash\theta\}.
\]
For every lower-bound future scheme
\(\gamma=\varphi\,U^1_{\ge *}\,\psi\), define
\[
  G_\gamma=
  \{H\in\mathsf H_\theta\mid
    c(\gamma)=\bot\text{ or }H\Vdash\psi\}.
\]
The type graph has state set \(\mathsf H_\theta\), initial set
\(\mathsf{Init}_\theta\), transitions \(H\xrightarrow{\sigma}_\theta H'\),
and generalized B\"uchi family
\[
  \mathcal G_\theta=\{G_\gamma\mid
    \gamma\text{ is a lower-bound future scheme}\}.
\]
Convert the generalized automaton by the usual round-robin construction and
add a fresh pre-initial state \(\iota_\theta\). The resulting ordinary
B\"uchi automaton is
\[
  A_\theta=(S_\theta,S_{0,\theta},\delta_\theta,F_\theta),
  \qquad S_{0,\theta}=\{\iota_\theta\}.
\]
On the first input letter \(\sigma\), the automaton moves from
\(\iota_\theta\) to a state whose type component \(H\) is prefix-initial and
satisfies
\[
  v(p)=1\quad\Longleftrightarrow\quad p\in\sigma
\]
for every atomic proposition in the closure. Later transitions follow
\(H\xrightarrow{\sigma}_\theta H'\), together with the round-robin update.
Let \(C_\theta\subseteq S_\theta\) consist of the non-initial states whose type
component belongs to \(C^0_\theta\). If \(\mathcal G_\theta=\emptyset\), all
non-initial states may be declared accepting.
\end{definition}

The set \(G_\gamma\) rules out runs on which a saturated lower-bound counter
stays active forever while \(\psi\) never holds. Upper-bound future counters
need no acceptance set: once active, they decrease along the run and reach
\(0\) within \(B_\theta\) steps. Past counters are determined by the prefix
already read and likewise need no acceptance set.

\begin{lemma}[Knowledge-free pointed testers]
\label{lem:kfree-upper}
Let \(\AP'\) be a finite set of atomic propositions and let \(\theta\) be a
knowledge-free KMTL formula over \(\AP'\) with syntactically represented,
binary-encoded metric intervals. The preceding construction yields a
B\"uchi automaton
\[
  A_\theta=(S_\theta,S_{0,\theta},\delta_\theta,F_\theta)
\]
over \(2^{\AP'}\) and a marked set \(C_\theta\subseteq S_\theta\) such that:
\begin{enumerate}[(i)]
\item \(|S_\theta|\le2^{\mathsf{poly}(|\theta|)}\). States and the predicates
      for initial, successor, accepting, and marked states are representable
      and testable in space polynomial in \(|\theta|+|\AP'|\).
\item For every \(w=\sigma_1\sigma_2\ldots\) and every \(i\ge1\),
      \[
        w,i\models\theta
        \quad\Longleftrightarrow\quad
        \begin{array}{l}
        \text{there is an accepting run }s_0s_1s_2\ldots
        \text{ of }A_\theta\text{ over }w\\
        \text{with }s_i\in C_\theta.
        \end{array}
      \]
\end{enumerate}
\end{lemma}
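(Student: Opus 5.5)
Part (i) is a counting and representation argument, and part (ii) needs a soundness and a completeness direction, which I would organise around the notion of the \emph{truthful type sequence} of a word.

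\textbf{Part (i).} A type $H=(v,c)$ consists of polynomially many bits for $v$, since the closure of the normalized DAG is polynomial. It also has one counter of $O(|\theta|)$ bits per scheme, because $B_\theta<2^{|\theta|}$. The round-robin index adds $O(\log|\mathcal G_\theta|)$ bits, and $\iota_\theta$ adds one more state. This gives $|S_\theta|\le 2^{\mathsf{poly}(|\theta|)}$. Every predicate is a local check on polynomially many closure entries and counters: $H\Vdash\delta$ by recursion on the DAG, prefix-initiality, conditions (i)--(v) of Definition~\ref{def:upper-type-step}, membership in $G_\gamma$, and membership in $C^0_\theta$. Each of these runs in polynomial space.

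\textbf{Part (ii), direction ($\Rightarrow$).} Fix $w$ and, after normalization, define for every $j\ge1$ the truthful type $H_j=(v_j,c_j)$. Here $v_j$ records the actual truth of the atoms and of the $X$/$Y$ entries at $j$. For each scheme, $c_j$ is the truthful counter of the future or past witness distance at $j$, as defined before Definition~\ref{def:upper-type-step}. Let $s_0=\iota_\theta$, and let $s_j$ carry $H_j$ together with round-robin bookkeeping. The key claim is the following:
\[
  H_j\Vdash\delta\iff w,j\models\delta\quad\text{for every closure entry }\delta\text{ and every concrete instance.}
\]
I prove it by structural induction on the DAG, using that capped counters preserve every comparison with a bound in $\theta$. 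Next I check that $H_1$ is prefix-initial and that each pair $H_j\to H_{j+1}$ satisfies (i)--(v). For (iv) and (v) this reduces to the first-time recursion of witness distances: the distance is $0$ if $\psi$ holds, one more than the neighbouring distance if $\varphi\wedge\neg\psi$ holds, and undefined otherwise. Capped increment commutes with this recursion for both kinds of scheme. Acceptance holds because a truthful lower-bound future counter that is not $\bot$ has an actual witness. Hence $G_\gamma$ recurs infinitely often, and the round-robin construction then accepts. Finally, $s_i\in C_\theta$ follows from the key claim.

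\textbf{Part (ii), direction ($\Leftarrow$); this is the main obstacle.} Take any accepting run with types $H_1H_2\ldots$. I show that $H_j$ agrees with the truthful type on every closure entry, by induction on the DAG. The induction hypothesis is "$H_j\Vdash\delta\iff w,j\models\delta$ for all $j$", applied to the subformulas.
\begin{itemize}
\item Atoms follow from (i) and the $\iota_\theta$ step. $X$ follows from (ii). $Y$ follows from (iii) together with prefix-initiality at $j=1$.
\item Past schemes: by the hypothesis on $\varphi,\psi$, condition (v) and the initial condition coincide with the truthful forward recursion. Since that recursion is deterministic, a forward induction on $j$ gives $c_j=$ the truthful counter.
\item Future schemes: (iv) determines $c_j$ from $c_{j+1}$ whenever $\psi$ or $\neg\varphi$ holds. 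If the truthful witness exists at distance $d$, a backward induction from $j+d$ forces the correct value, with capping handled case by case. If no witness exists, two subcases arise. When some $\neg\varphi\wedge\neg\psi$ position occurs later, backward induction from it forces $\bot$. When $\varphi\wedge\neg\psi$ holds forever, the upper-bound counters must become $\bot$, because a non-$\bot$ value strictly decreases and has to reach $0$ at a $\psi$-position, which would contradict the hypothesis. The lower-bound counters are forced to $\bot$ by acceptance: a non-$\bot$ value would persist forever (it saturates at $B_\theta$ and is never reset), so $G_\gamma$ would never be visited.
\end{itemize}
This last case is exactly where the Büchi condition is needed, and it is the delicate point. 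With the correct counters in hand, the Boolean cases follow from the compositional definition of $\Vdash$. In particular, $s_i\in C_\theta$ gives $w,i\models\theta$.
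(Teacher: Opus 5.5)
Your proposal is correct and follows essentially the same route as the paper. Part (i) is the same counting argument. Completeness uses the truthful type sequence, with \(G_\gamma\) recurring because every active lower-bound future counter has a real witness. Soundness is structural induction over an arbitrary accepting run: backward propagation for future schemes, forward propagation for past schemes, and the B\"uchi sets \(G_\gamma\) used exactly when \(\varphi\wedge\neg\psi\) holds forever.
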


\begin{proof}
The size statement follows from Definition~\ref{def:upper-types}, since all local
tests inspect polynomially many Boolean entries and counters of
\(O(|\theta|)\) bits.

For correctness, consider an accepting run of \(A_\theta\) over \(w\), and
write \(H_i\) for the type component after reading \(\sigma_i\). We prove, by
structural induction on \(\delta\), that for every closure formula and every
concrete instance of a closure scheme,
\[
  H_i\Vdash\delta
  \quad\Longleftrightarrow\quad
  w,i\models\delta.
  \tag{*}
\]
The atomic case follows from the letter condition. Boolean formulas follow
from the recursive definition of \(\Vdash\). Conditions (ii) and (iii) give
the next and yesterday cases, and prefix-initiality supplies the false value of
every yesterday formula at position \(1\).

Consider a future scheme \(\gamma\) with arguments \(\varphi,\psi\). By the
induction hypothesis, the type and the word agree on both arguments at every
position. If a witness occurs at distance \(d\), condition (iv) fixes the
counter to \(0\) at the witness and propagates it backward through the
preceding \(\varphi\wedge\neg\psi\)-positions. The value at position \(i\) is
therefore the truthful capped distance.

Suppose no witness exists. If
\(\neg\varphi\wedge\neg\psi\) first occurs at a position \(k\ge i\), condition
(iv) sets the counter to \(\bot\) at \(k\) and propagates \(\bot\) back to
\(i\). Otherwise \(\varphi\wedge\neg\psi\) holds forever. An active
upper-bound counter would then have to decrease indefinitely along the run,
which is impossible. An active lower-bound counter could remain at the cap,
but the set \(G_\gamma\) is visited infinitely often. Since \(\psi\) never
holds, some later counter value is \(\bot\), and the recurrence again
propagates \(\bot\) back to \(i\).

For a past scheme, induction on the position gives the truthful counter
directly. Prefix-initiality provides the correct value at position \(1\), and
condition (v) updates the capped distance at each subsequent position. This
proves \((*)\).

If \(s_i\in C_\theta\), then \(H_i\Vdash\theta\), and \((*)\) gives
\(w,i\models\theta\). This proves soundness.

For completeness, define \(H_i\) from the semantic truth values of the atomic,
next, and yesterday entries at \((w,i)\), and assign every scheme its truthful
counter. The one-step equations in Definition~\ref{def:upper-type-step} hold by
the definition of witness distance. The first type is prefix-initial. Moreover,
every set \(G_\gamma\) is visited infinitely often: whenever a truthful
lower-bound future counter is active, its witness eventually reaches a
\(\psi\)-position. When no witness exists, the counter is \(\bot\).
The round-robin extension therefore gives an accepting run. If
\(w,i\models\theta\), then its state at position \(i\) belongs to
\(C_\theta\).
\end{proof}

This proves Lemma~\ref{lem:kfree-test-automata} of the main text, which states
the same result in the notation used there.

\paragraph*{The live predicate}
Fix the input model
\(\M=(Q,Q_0,\Delta,F_\M)\) over \(2^{\AP}\). For a knowledge-free
\(\theta\) over \(\AP\), define
\[
  \mathsf{Live}^\M_\theta\subseteq Q\times S_\theta
\]
as follows. The product has transitions
\[
  (q,s)\longrightarrow(q',s')
\]
if some \(\sigma\in2^{\AP}\) satisfies
\[
  (q,\sigma,q')\in\Delta
  \quad\text{and}\quad
  (s,\sigma,s')\in\delta_\theta.
\]
A pair \((q,s)\) belongs to \(\mathsf{Live}^\M_\theta\) if \(s\in C_\theta\)
and the product has an infinite continuation from \((q,s)\) that visits
\(F_\M\times S_\theta\) and \(Q\times F_\theta\) infinitely often. Membership
is a generalized B\"uchi nonemptiness test in the exponential graph. Each state
has polynomial size and the transition relation is polynomial-space testable,
so it is decidable in polynomial space.

Let
\[
  q_0\xrightarrow{\sigma_1}q_1\cdots
  \xrightarrow{\sigma_i}q_i
\]
be a run prefix of \(\M\). It extends to an accepting run whose word satisfies
\(\theta\) at position \(i\) iff \(A_\theta\) has a run on
\(\sigma_1\ldots\sigma_i\) ending in a state \(s\) with
\((q_i,s)\in\mathsf{Live}^\M_\theta\). The forward implication follows from
the completeness part of Lemma~\ref{lem:kfree-upper}, and the reverse implication
follows from soundness after joining the two prefixes to a live product
continuation.

\subsection{Observers and Knowledge Tests}

We first set up the notation for abstracting inner knowledge. Consider an occurrence
\[
  \eta=K_a\psi.
\]
Every knowledge operator nested in \(\psi\) is also indexed by \(a\), because
the formula is agent-alternation-free. Let
\(\mathsf{Imm}(\eta)\) be the knowledge-subformula occurrences in \(\psi\) that
are not contained in another knowledge subformula of \(\psi\). Assign a fresh
proposition \(p_\zeta\) to every \(\zeta\in\mathsf{Imm}(\eta)\), and let
\(\widehat\psi\) be obtained by replacing each such \(\zeta\) with
\(p_\zeta\). Then \(\widehat\psi\) is knowledge-free. Write
\[
  \Pin_\eta=\{p_\zeta\mid\zeta\in\mathsf{Imm}(\eta)\},
  \qquad
  \chi_\eta=\neg\widehat\psi,
\]
and let
\[
  A_\eta=(S_\eta,S_{0,\eta},\delta_\eta,F_\eta),
  \qquad C_\eta\subseteq S_\eta,
\]
be the pointed tester for \(\chi_\eta\). Since \(\chi_\eta\) is knowledge-free
over \(\AP\cup\Pin_\eta\), Lemma~\ref{lem:kfree-test-automata} applies with
\(\AP'=\AP\cup\Pin_\eta\), so \(\delta_\eta\subseteq
S_\eta\times2^{\AP\cup\Pin_\eta}\times S_\eta\) and each letter read by
\(A_\eta\) has the form \(\tau\cup\kappa'\) with \(\tau\in2^{\AP}\) and
\(\kappa'\subseteq\Pin_\eta\). All fresh propositions are chosen
outside \(\AP\) and are distinct across knowledge-subformula occurrences. When
the occurrence is clear, we also write
\[
  \Pin_{a,\psi}=\Pin_\eta,\qquad
  O_{a,\psi}=O_\eta,\qquad
  \mathsf{Live}^\M_{a,\psi}=\mathsf{Live}^\M_\eta,
\]
in agreement with the notation of the main text.

\begin{lemma}[Same-agent invariance]
\label{lem:same-agent-constant}
If \((w,i)\sim_a(w',i)\), then
\[
  w,i\models_\M K_a\xi
  \quad\Longleftrightarrow\quad
  w',i\models_\M K_a\xi.
\]
Consequently, the truth of \(K_a\xi\) at position \(i\) depends only on
\(\obs_a(\pref{w}{i})\).
\end{lemma}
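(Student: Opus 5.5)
The plan is to unfold the semantics of $K_a$ and use that synchronous indistinguishability $\sim_a$ is an equivalence relation on pointed words of the same position. By symmetry it suffices to prove one direction: assume $(w,i)\sim_a(w',i)$ and $w,i\models_\M K_a\xi$, and derive $w',i\models_\M K_a\xi$.

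For that direction, take an arbitrary $w''\in L(\M)$ with $(w',i)\sim_a(w'',i)$. Indistinguishability means equality of the finite words $\obs_a(\pref{w'}{i})=\obs_a(\pref{w''}{i})$ at the common position $i$. Combined with $\obs_a(\pref{w}{i})=\obs_a(\pref{w'}{i})$, transitivity of equality gives $(w,i)\sim_a(w'',i)$. The hypothesis $w,i\models_\M K_a\xi$ then yields $w'',i\models_\M\xi$. Since $w''$ was arbitrary, the satisfaction clause for $K_a$ gives $w',i\models_\M K_a\xi$.

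For the reverse implication, first note that $\sim_a$ is symmetric, because it is defined by equality of the positions and of the observation prefixes. Then apply the same argument with $w$ and $w'$ exchanged. Reflexivity is not needed here, though it holds trivially.

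For the consequence, define the truth value of $K_a\xi$ at an $a$-observation prefix $u$ of length $i$. Set it to true iff $\xi$ holds at position $i$ on every $w''\in L(\M)$ with $\obs_a(\pref{w''}{i})=u$. This is literally the semantic clause for $K_a\xi$ with $u=\obs_a(\pref{w}{i})$. Equivalently, the first part shows the truth value is constant on each $\sim_a$-class, and a class is determined by the pair $(i,u)$, where $i=|u|$. The argument places no restriction on $\xi$; in particular it may contain further $K_a$ operators or other agents. No step is a real obstacle. The only care needed is that $\sim_a$ requires equal positions, which is automatic since knowledge is always evaluated at the same $i$.
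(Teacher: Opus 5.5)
Your proof is correct and follows the paper's argument: $\sim_a$ is an equivalence relation, since it is defined by equality of positions and of observation prefixes, so indistinguishable pointed words share one $a$-class, and $K_a\xi$ quantifies universally over that class. You spell out the symmetry and transitivity steps that the paper's one-line proof leaves implicit.
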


\begin{proof}
Indistinguishable pointed words have the same \(a\)-equivalence class, and
\(K_a\xi\) universally quantifies over that class.
\end{proof}

Call \(u=o_1\ldots o_i\in(2^{\AP_a})^i\) \emph{realized} if
\[
  u=\obs_a(\pref{w}{i})
\]
for some \(w\in L(\M)\). For an immediate inner block
\(\zeta=K_a\psi_\zeta\), Lemma~\ref{lem:same-agent-constant} makes its truth
value a function of \(u\). Let
\[
  \kappa_\eta(u)=
  \{p_\zeta\in\Pin_\eta\mid
    w,i\models_\M\zeta
    \text{ for some, equivalently every, }w
    \text{ with }\obs_a(\pref{w}{i})=u\}.
\]
These valuations are defined by induction on knowledge-nesting depth.

\begin{definition}[Observer]
\label{def:observer}
For a realized observation prefix \(u=o_1\ldots o_i\), \(i\ge1\), define
\[
  O_\eta(u)\subseteq Q\times S_\eta
\]
to contain a pair \((q,s)\) iff there are run prefixes
\[
  q_0\xrightarrow{\tau_1}q_1\cdots
  \xrightarrow{\tau_i}q_i
\]
of \(\M\) and
\[
  s_0s_1\ldots s_i
\]
of \(A_\eta\) such that
\[
  q_0\in Q_0,\qquad q_i=q,\qquad s_0\in S_{0,\eta},\qquad s_i=s,
\]
and, for every \(1\le j\le i\),
\[
  \obs_a(\tau_j)=o_j,
  \qquad
  (s_{j-1},\,\tau_j\cup\kappa_\eta(o_1\ldots o_j),\,s_j)
  \in\delta_\eta.
\]
\end{definition}

Thus \(O_\eta(u)\) contains the joint states reachable along all model prefixes
with observation \(u\), while \(A_\eta\) reads the corresponding model letters
extended by the truth values of the immediate inner blocks. The model prefixes
in this set need not themselves be live, and liveness is tested separately.

For \(o\in2^{\AP_a}\) and \(\kappa'\subseteq\Pin_\eta\), put
\[
\begin{aligned}
\mathsf{init}_\eta(o,\kappa')
  =\{(q,s)\mid {}&
  \exists q_0\in Q_0,\ \exists\tau\in2^{\AP},\
  \exists s_0\in S_{0,\eta}:\\[-1mm]
& (q_0,\tau,q)\in\Delta,\quad
  \obs_a(\tau)=o,\quad
  (s_0,\tau\cup\kappa',s)\in\delta_\eta\},
\\[1mm]
\mathsf{post}_\eta(O,o,\kappa')
  =\{(q',s')\mid {}&
  \exists(q,s)\in O,\ \exists\tau\in2^{\AP}:\\[-1mm]
& (q,\tau,q')\in\Delta,\quad
  \obs_a(\tau)=o,\quad
  (s,\tau\cup\kappa',s')\in\delta_\eta\}.
\end{aligned}
\]

\begin{lemma}[Observer recursion]
\label{lem:observer-recursion}
For every realized \(u=o_1\ldots o_i\) and every \(o\) such that \(uo\) is
realized,
\[
  O_\eta(o_1)=
  \mathsf{init}_\eta(o_1,\kappa_\eta(o_1)),
\]
and
\[
  O_\eta(uo)=
  \mathsf{post}_\eta(O_\eta(u),o,\kappa_\eta(uo)).
\]
\end{lemma}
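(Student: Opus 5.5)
This is a direct unfolding of Definition~\ref{def:observer}, proved by double inclusion for the base case and for the step. The one point that needs attention is that the valuation $\kappa_\eta$ is a function of the observation prefix alone. That is what makes the valuation used at step $j$ identical on every model prefix consistent with $u$.

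\emph{Base case.} Let $(q,s)\in O_\eta(o_1)$. Definition~\ref{def:observer} gives $q_0\in Q_0$, $s_0\in S_{0,\eta}$ and a letter $\tau_1$ with
\[
(q_0,\tau_1,q)\in\Delta,\qquad \obs_a(\tau_1)=o_1,\qquad (s_0,\tau_1\cup\kappa_\eta(o_1),s)\in\delta_\eta .
\]
These are exactly the witnesses required by $\mathsf{init}_\eta(o_1,\kappa_\eta(o_1))$. Conversely, any witnesses for $\mathsf{init}_\eta$ form run prefixes of length one satisfying the conditions of Definition~\ref{def:observer}.

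\emph{Inductive step, inclusion $\subseteq$.} Take $(q',s')\in O_\eta(uo)$ with witnessing run prefixes $q_0\ldots q_{i+1}$ of $\M$ and $s_0\ldots s_{i+1}$ of $A_\eta$, on letters $\tau_1\ldots\tau_{i+1}$. Truncating both prefixes at index $i$ still satisfies every condition for $j\le i$. For these $j$ the valuation is $\kappa_\eta(o_1\ldots o_j)$, which depends only on the prefix $o_1\ldots o_j$ of $u$ and not on the extension $o$. Hence $(q_i,s_i)\in O_\eta(u)$; here we use that $u$ is realized, as a prefix of the realized $uo$. The last step uses $\tau_{i+1}$, with $\obs_a(\tau_{i+1})=o$ and valuation $\kappa_\eta(uo)$. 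This exhibits $(q',s')\in\mathsf{post}_\eta(O_\eta(u),o,\kappa_\eta(uo))$.

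\emph{Inductive step, inclusion $\supseteq$.} Take $(q,s)\in O_\eta(u)$ with its witnessing prefixes, and a letter $\tau$ with
\[
(q,\tau,q')\in\Delta,\qquad \obs_a(\tau)=o,\qquad (s,\tau\cup\kappa_\eta(uo),s')\in\delta_\eta .
\]
Appending this step to both prefixes gives witnesses for $(q',s')\in O_\eta(uo)$. The conditions for $j\le i$ are inherited unchanged, and the condition for $j=i+1$ is the appended step.

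\emph{Well-definedness of $\kappa_\eta$.} The only nontrivial ingredient is that $\kappa_\eta(o_1\ldots o_j)$ is a single well-defined set, independent of which realizing word is chosen. This follows from Lemma~\ref{lem:same-agent-constant} applied to each immediate inner block $\zeta\in\mathsf{Imm}(\eta)$, all of which are $K_a$-formulas by agent-alternation-freeness. This is needed so that the same valuation applies along every witnessing model prefix, and so that truncating a prefix does not change the valuations of earlier steps. The rest is routine bookkeeping; no liveness or acceptance conditions enter, since Definition~\ref{def:observer} involves only run prefixes.
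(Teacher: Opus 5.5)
Your proposal is correct and follows the paper's argument. The base case unfolds Definition~\ref{def:observer} at $i=1$, and the step splits a joint run over $uo$ after its prefix over $u$ and, conversely, appends one step reading $\tau\cup\kappa_\eta(uo)$. Your remark that $\kappa_\eta$ is well defined by Lemma~\ref{lem:same-agent-constant} makes explicit a point the paper handles in the definition of $\kappa_\eta$ rather than in this proof.
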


\begin{proof}
For the first equality, unfold Definition~\ref{def:observer} with \(i=1\).
For the second, split a joint run over \(uo\) after its prefix over \(u\).
The last model transition reads some \(\tau\) with \(\obs_a(\tau)=o\), and the
last tester transition reads
\(\tau\cup\kappa_\eta(uo)\). The converse concatenation gives every pair in
the stated post-image.
\end{proof}

\paragraph*{Relative liveness}
The observer \(O_\eta(u)\) records the prefix up to the current position.
Along a future counterexample continuation, the tester \(A_\eta\) must also
receive the values of the propositions in \(\Pin_\eta\). Those values depend
on the continued observation history. We therefore retain the observers of
all knowledge blocks properly nested in \(\eta\).

Let \(\mathsf{Desc}(\eta)\) be the set of all knowledge-subformula occurrences
strictly inside \(\eta\). For a realized observation prefix \(u\), define the
inner tuple
\[
  \mathcal I_\eta(u)=
  \bigl(O_\zeta(u)\bigr)_{\zeta\in\mathsf{Desc}(\eta)}.
\]
Here every \(\zeta\) has agent \(a\), so all components are indexed by the same
observation prefix. The definitions below proceed by induction on the
knowledge depth of \(\eta\).

For an abstract inner tuple \(\mathcal I=(O_\zeta)_\zeta\), define the valuation
of the immediate inner propositions by
\[
  p_\zeta\in\kappa_\eta(\mathcal I)
\]
iff no pair \((\tilde q,\tilde s)\in O_\zeta\) satisfies
\[
  \mathsf{Live}^\M_\zeta
  (\tilde q,\tilde s,\mathcal I_\zeta),
\]
where \(\mathcal I_\zeta\) is the subtuple indexed by
\(\mathsf{Desc}(\zeta)\). The predicate on the right concerns a block of
strictly smaller knowledge depth.

For \(o\in2^{\AP_a}\), define \(\mathsf{post}(\mathcal I,o)\) by updating its
components from the deepest blocks outwards. When the component \(O_\zeta\) is
updated, all components strictly inside \(\zeta\) have already been updated and
determine the new valuation \(\kappa_\zeta\). The new component is then
\[
  \mathsf{post}_\zeta(O_\zeta,o,\kappa_\zeta).
\]

Define
\[
  \mathsf{Live}^\M_\eta(q,s,\mathcal I)
\]
on configurations \((q,s,\mathcal I)\), where \(q\in Q\), \(s\in S_\eta\),
and \(\mathcal I\) is an inner tuple for \(\eta\). There is a transition
\[
  (q,s,\mathcal I)\longrightarrow(q',s',\mathcal I')
\]
iff some \(\tau\in2^{\AP}\) satisfies
\[
  (q,\tau,q')\in\Delta,
  \qquad
  \mathcal I'=\mathsf{post}(\mathcal I,\obs_a(\tau)),
\]
and
\[
  (s,\tau\cup\kappa_\eta(\mathcal I'),s')\in\delta_\eta.
\]
The predicate holds if \(s\in C_\eta\) and this graph has an infinite
continuation visiting \(F_\M\) in the first component and \(F_\eta\) in the
second component infinitely often.

If \(\eta\) has no inner knowledge block, the tuple is empty and the definition
is the ordinary live predicate for \(\chi_\eta=\neg\psi\). Otherwise,
successor generation invokes liveness predicates only for proper descendants.
An inner tuple has an exponential-space representation, and the recursion
depth is linear in the formula size, so membership in
\(\mathsf{Live}^\M_\eta\) is decidable in exponential space.

\begin{lemma}[Knowledge test]
\label{lem:knowledge-test}
Let \(\eta=K_a\psi\), let \(w\in L(\M)\), and let \(i\ge1\). Put
\[
  u=\obs_a(\pref{w}{i}).
\]
Then
\[
  w,i\models_\M\eta
\]
iff no \((q,s)\in O_\eta(u)\) satisfies
\[
  \mathsf{Live}^\M_\eta
  \bigl(q,s,\mathcal I_\eta(u)\bigr).
\]
\end{lemma}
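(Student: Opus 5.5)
We prove the lemma by induction on the knowledge depth of \(\eta\). The inductive hypothesis is the statement itself for every proper descendant \(\zeta\in\mathsf{Desc}(\eta)\), applied at every position. The core is a two-way correspondence. Configurations \((q,s,\mathcal I)\) reachable from \(O_\eta(u)\times\{\mathcal I_\eta(u)\}\) correspond to pairs (model run, tester run) on words \(w'\) with \((w,i)\sim_a(w',i)\). The tester runs read \(w'\) enriched with the semantic values of the immediate inner blocks.

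The first step is an auxiliary claim. For every realized prefix \(v\) and letter \(\tau\) with \(v\,\obs_a(\tau)\) realized,
\[
  \mathsf{post}(\mathcal I_\eta(v),\obs_a(\tau))=\mathcal I_\eta(v\,\obs_a(\tau)),
\]
and \(\kappa_\eta(\mathcal I_\eta(v))=\kappa_\eta(v)\). The second equation is exactly the inductive hypothesis for the immediate blocks \(\zeta\in\mathsf{Imm}(\eta)\), applied to any witness word for \(v\); Lemma~\ref{lem:same-agent-constant} makes the choice of witness irrelevant. The first equation follows componentwise, deepest blocks first, from Lemma~\ref{lem:observer-recursion}, using the second equation at each inner level.

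Along the liveness graph, the prefix \(u\) is extended only by observations of model transitions. So every configuration visited on a path from a pair of \(O_\eta(u)\) carries \(\mathcal I_\eta(v)\) for the current realized prefix \(v\), provided \(v\) is realized. This holds along any accepting continuation, since the continuation yields a word of \(L(\M)\). Consequently the tester there reads \(\tau\cup\kappa_\eta(v)\), which is precisely the enrichment of \(w'\) by the true values of the propositions \(p_\zeta\). Define the enriched word \(\widehat{w'}\) by adding the propositions in \(\kappa_\eta(\obs_a(\pref{w'}{j}))\) at each position \(j\). Then \(\widehat{w'},j\models\widehat\psi\) iff \(w',j\models_\M\psi\), by structural induction on \(\widehat\psi\).

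The two directions then go as follows.

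\begin{itemize}
\item \textbf{Soundness} (if some live pair exists, \(\eta\) fails). Take \((q,s)\in O_\eta(u)\) with \(\mathsf{Live}^\M_\eta(q,s,\mathcal I_\eta(u))\). Concatenate the witnessing prefix from Definition~\ref{def:observer} with the accepting continuation. This gives a word \(w'\in L(\M)\) with \(\obs_a(\pref{w'}{i})=u\), together with an accepting run of \(A_\eta\) on \(\widehat{w'}\) that lies in \(C_\eta\) at position \(i\). Lemma~\ref{lem:kfree-test-automata} gives \(\widehat{w'},i\models\chi_\eta\), hence \(w',i\not\models_\M\psi\), so \(w,i\not\models_\M\eta\).
\item \textbf{Completeness} (if \(\eta\) fails, some live pair exists). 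Take an indistinguishable \(w'\in L(\M)\) with \(w',i\not\models\psi\). Lemma~\ref{lem:kfree-test-automata} yields an accepting tester run on \(\widehat{w'}\) with \(s_i\in C_\eta\); pair it with an accepting model run of \(w'\). Its state at position \(i\) lies in \(O_\eta(u)\), because the enrichment on the prefix coincides with \(\kappa_\eta(\cdot)\). The suffix traces an accepting path in the liveness graph by the auxiliary claim.
\end{itemize}

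The main obstacle is the acceptance condition. The product of two B\"uchi runs must visit both accepting sets infinitely often; this is exactly the generalized condition in the definition, so the runs combine directly. A subtler point is that the liveness graph might also use letters leading to unrealized observation prefixes. Along an accepting path, however, the model component produces a word in \(L(\M)\), so all prefixes are realized and the auxiliary claim applies throughout.
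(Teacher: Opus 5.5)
Your proposal is correct and follows the paper's proof. Both argue by induction on knowledge depth so that the inner observers supply the semantic values of the immediate blocks, get soundness by concatenating the observer prefix with the live continuation, and get completeness by pairing accepting runs of \(\M\) and \(A_\eta\) on the enriched indistinguishable witness. Your auxiliary claim (\(\mathsf{post}(\mathcal I_\eta(v),o)=\mathcal I_\eta(vo)\) and \(\kappa_\eta(\mathcal I_\eta(v))=\kappa_\eta(v)\)), together with your remark that accepting liveness paths only pass through realized prefixes, spells out what the paper compresses into one sentence about updating descendant observers along realized continuations.
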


\begin{proof}
The proof is by induction on the knowledge depth of \(\eta\). By the induction
hypothesis, updating the descendant observers along any realized continuation
produces the semantic truth values of the corresponding inner knowledge
formulas. Hence \(A_\eta\) reads each model letter extended by the correct
valuation of \(\Pin_\eta\).

Suppose first that \((q,s)\in O_\eta(u)\) satisfies the relative live
predicate. The observer supplies a model prefix with observation \(u\) and a
compatible \(A_\eta\)-prefix ending in \((q,s)\). The live continuation extends
both prefixes to accepting runs. Let \(w'\in L(\M)\) be the resulting model word and let
\(\widetilde w'\) be its extension by the fresh valuations. Since
\(s\in C_\eta\), soundness of Lemma~\ref{lem:kfree-upper} gives
\[
  \widetilde w',i\models\chi_\eta.
\]
By the definition of the fresh valuations, this is equivalent to
\(w',i\models_\M\neg\psi\). Moreover,
\((w,i)\sim_a(w',i)\). Thus \(w,i\not\models_\M K_a\psi\).

Conversely, suppose \(w,i\not\models_\M K_a\psi\). There is a word
\(w'\in L(\M)\) such that
\[
  (w,i)\sim_a(w',i)
  \quad\text{and}\quad
  w',i\models_\M\neg\psi.
\]
Extend \(w'\) at every position with the truth values of the immediate inner
blocks. The resulting word satisfies \(\chi_\eta\) at position \(i\).
Completeness of Lemma~\ref{lem:kfree-upper} yields an accepting run of
\(A_\eta\) whose state at \(i\) lies in \(C_\eta\). Its prefix, together with
an accepting run of \(\M\) over \(w'\), shows that the corresponding pair lies
in \(O_\eta(u)\). The suffixes of the two runs, with the descendant observers
updated along \(w'\), witness the relative live predicate.
\end{proof}

\subsection{Alternation Depth One and the Global Product}

Let \(\mathsf{Know}(\varphi)\) be the set of knowledge-subformula occurrences
of \(\varphi\). For each such occurrence, construct its observer and relative
live predicate as above. Because \(\varphi\in\KMTLone\), every descendant of a
block \(K_a\psi\) has the same agent \(a\). Different agents may occur in
separate, non-nested blocks.

Replace every maximal knowledge subformula of \(\neg\varphi\) by its fresh
proposition, obtaining a knowledge-free formula
\(\theta_{\mathsf{top}}\). The observers of all nested descendants are retained
because they are needed to evaluate the maximal blocks.

A global configuration is
\[
  (q,t,\mathcal O),
\]
where \(q\in Q\), \(t\in S_{\theta_{\mathsf{top}}}\), and
\[
  \mathcal O=(O_\eta)_{\eta\in\mathsf{Know}(\varphi)}
\]
contains one observer set per knowledge block. The tuple determines a valuation
\(\kappa_{\mathcal O}\) of all fresh propositions that occur in
\(\theta_{\mathsf{top}}\): the proposition for \(\eta\) is true iff its observer
contains no pair satisfying the relative live predicate from the corresponding
descendant subtuple.

An initial configuration is obtained by choosing
\[
  (q_0,\sigma,q)\in\Delta
  \quad\text{with}\quad q_0\in Q_0.
\]
Initialize all observer sets on \(\sigma\), from the deepest blocks outwards.
Each block uses the observation \(\obs_a(\sigma)\) of its own agent. This yields
\(\mathcal O\) and \(\kappa_{\mathcal O}\). Choose
\(t_0\in S_{0,\theta_{\mathsf{top}}}\) and
\(t\in C_{\theta_{\mathsf{top}}}\) such that
\[
  (t_0,\sigma\cup\kappa_{\mathcal O},t)
  \in\delta_{\theta_{\mathsf{top}}}.
\]

From \((q,t,\mathcal O)\), choose a model transition
\[
  (q,\sigma',q')\in\Delta.
\]
Update every observer on the observation of \(\sigma'\) for its agent, again
from the deepest blocks outwards, to obtain \(\mathcal O'\). Compute
\(\kappa_{\mathcal O'}\), and choose \(t'\) with
\[
  (t,\sigma'\cup\kappa_{\mathcal O'},t')
  \in\delta_{\theta_{\mathsf{top}}}.
\]
The successor is \((q',t',\mathcal O')\). The acceptance condition requires
the first component to visit \(F_\M\) infinitely often and the second to visit
\(F_{\theta_{\mathsf{top}}}\) infinitely often. Acceptance conditions of the
inner testers are checked within the relative live predicates and impose no
additional condition on the global path.

\begin{lemma}[Global correctness]
\label{lem:global-correct}
The global product has an accepting run iff there is a word \(w\in L(\M)\)
such that
\[
  w,1\models_\M\neg\varphi.
\]
\end{lemma}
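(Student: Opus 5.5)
We prove both directions by relating accepting runs of the global product to words of \(L(\M)\). The main tools are the global run invariant (Lemma~\ref{lem:global-invariant}), the knowledge test (Lemma~\ref{lem:knowledge-test}), and the pointed tester (Lemma~\ref{lem:kfree-upper}) applied to \(\theta_{\mathsf{top}}\) over the enriched alphabet \(2^{\AP\cup P_{\mathsf{top}}}\), where \(P_{\mathsf{top}}\) is the set of fresh propositions for maximal blocks. The central auxiliary claim is a substitution fact, proved by structural induction on \(\neg\varphi\). Let \(\widetilde w\) be obtained from \(w\in L(\M)\) by adding at each position \(i\) exactly those \(p_\eta\) with \(w,i\models_\M\eta\). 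Then \(\widetilde w,1\models\theta_{\mathsf{top}}\) iff \(w,1\models_\M\neg\varphi\). This holds because the semantics of every temporal operator is compositional in the truth values of its arguments at positions of the same word, and each maximal \(\eta\) is replaced by a proposition with exactly its truth value. Knowledge operators occur only inside the replaced blocks, so the quantification over \(L(\M)\) is encapsulated in the valuation.

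\emph{Soundness.} Take an accepting global run. Projecting to the model component gives an accepting run of \(\M\) on some \(w\), so \(w\in L(\M)\). By items (ii) and (iii) of Lemma~\ref{lem:global-invariant}, the letters \(\sigma_i\cup\kappa_{\mathcal O_i}\) read by the top component are exactly \(\widetilde w\). Item (iv) and the acceptance of the second component then give an accepting run of \(A_{\theta_{\mathsf{top}}}\) on \(\widetilde w\). The initial configuration requires the state after the first letter to lie in \(C_{\theta_{\mathsf{top}}}\). Soundness in Lemma~\ref{lem:kfree-upper} therefore yields \(\widetilde w,1\models\theta_{\mathsf{top}}\), and the substitution fact gives \(w,1\models_\M\neg\varphi\).

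\emph{Completeness.} Let \(w\in L(\M)\) with \(w,1\models_\M\neg\varphi\), and fix an accepting run \(q_0q_1\ldots\) of \(\M\) on \(w\). By the substitution fact, \(\widetilde w,1\models\theta_{\mathsf{top}}\). Completeness in Lemma~\ref{lem:kfree-upper} then gives an accepting run \(t_0t_1\ldots\) of \(A_{\theta_{\mathsf{top}}}\) on \(\widetilde w\) with \(t_1\in C_{\theta_{\mathsf{top}}}\). Set \(\mathcal O_i=(O_\eta(\obs_{a_\eta}(\pref{w}{i})))_\eta\), where \(a_\eta\) is the agent of \(\eta\). By Lemma~\ref{lem:observer-recursion}, these are exactly the tuples produced by the deterministic observer updates along \(w\), processed from the deepest blocks outwards. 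The valuations \(\kappa_\eta\) used inside these updates are semantically correct by induction on knowledge depth using Lemma~\ref{lem:knowledge-test}, and the realized-prefix hypothesis holds because \(w\in L(\M)\). By Lemma~\ref{lem:knowledge-test}, \(\kappa_{\mathcal O_i}\) is the enrichment of \(\widetilde w\) at position \(i\). Hence \((q_i,t_i,\mathcal O_i)_{i\ge1}\) is a legal global run, and it visits \(F_\M\) and \(F_{\theta_{\mathsf{top}}}\) infinitely often.

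\emph{Main obstacle.} The delicate step is verifying that the valuation computed from the observers matches the semantic one. This means checking that the notion of ``realized'' prefix and the deepest-first update order in the global step coincide with the definitions behind Lemma~\ref{lem:knowledge-test}, including the case where distinct maximal blocks belong to different agents. Each block's component depends only on its own agent's observation, and agent-alternation-freeness ensures that every descendant shares that agent. So the per-block argument applies independently to each maximal block and its subtree. One must also check the first position separately, via \(\mathsf{init}_\eta\) and the pre-initial state of \(A_{\theta_{\mathsf{top}}}\).
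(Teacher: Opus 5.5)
Your proposal is correct and follows essentially the same route as the paper. Both proofs identify each observer component with \(O_\eta(\obs_a(\pref{w}{i}))\) via the observer recursion, use the knowledge test to show the fresh valuation is the semantic one, relate \(\theta_{\mathsf{top}}\) on the enriched word to \(\neg\varphi\) on \(w\) by structural induction, and then apply soundness (respectively completeness) of Lemma~\ref{lem:kfree-upper} together with the marked-state condition at position~\(1\). Your version only makes explicit what the paper leaves implicit: routing soundness through Lemma~\ref{lem:global-invariant}, the realized-prefix hypothesis, and the per-agent independence of maximal blocks.
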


\begin{proof}
Consider a global run over a word \(w\). By
Lemma~\ref{lem:observer-recursion}, the component for each block \(\eta=K_a\psi\)
after position \(i\) is
\[
  O_\eta(\obs_a(\pref{w}{i})).
\]
Lemma~\ref{lem:knowledge-test} therefore assigns every fresh proposition the
truth value of the knowledge subformula it represents. Structural induction on
the formula shows that the extended word satisfies
\(\theta_{\mathsf{top}}\) at a position iff \(w\) satisfies
\(\neg\varphi\) there.

If the global product has an accepting run, its first component is an accepting
run of \(\M\) over some \(w\in L(\M)\). Its second component is an accepting
run of \(A_{\theta_{\mathsf{top}}}\), and the initial global configuration
ensures that its state at position \(1\) belongs to
\(C_{\theta_{\mathsf{top}}}\). Soundness of
Lemma~\ref{lem:kfree-upper} gives
\[
  w,1\models_\M\neg\varphi.
\]

Conversely, suppose \(w\in L(\M)\) satisfies
\(w,1\models_\M\neg\varphi\), and fix an accepting run of \(\M\) over \(w\).
The observer recursion determines \(\mathcal O\) at every position, and the
knowledge test gives the correct fresh valuation. Completeness of
Lemma~\ref{lem:kfree-upper} supplies an accepting run of
\(A_{\theta_{\mathsf{top}}}\) whose state at position \(1\) is marked. These
components form an accepting run of the global product.
\end{proof}

\end{document}